\begin{document}

\newtheorem{theorem}{Theorem}[section]
\newtheorem{lemma}{Lemma}[section]
\newtheorem{corollary}{Corollary}[section]
\newtheorem{claim}{Claim}[section]
\newtheorem{proposition}{Proposition}[section]
\newtheorem{definition}{Definition}[section]
\newtheorem{fact}{Fact}[section]
\newtheorem{example}{Example}[section]

\newcommand{\afaire}{\dots\\}
\newcommand{\area}{\mathop{Area}}
\newcommand{\diameter}{\mathop{Diameter}}
\newcommand{\algo}{{\tt{LimExpTrav}}}
\newcommand{\algA}{{\tt{LET_A}}}
\newcommand{\algk}{{\tt{LET_k}}}
\newcommand{\alg}{\mathop{LET}}
\newcommand{\explo}{{\tt{ExpCell}}}
\newcommand{\expl}{{\tt{ExpCell}}}
\newcommand{\Null}{\mathop{Null}}
\newcommand{\cell}{\mathop{cell}}
\newcommand{\squared}{\mathop{square}}

\newcommand{\seg}[1]{\overline{#1}}
\def\bbR{\mathbb{R}}
\def\bbN{\mathbb{N}}

\newcommand{\quod}{\hfill $\blacksquare$ \bigbreak}
\newcommand{\reals}{I\!\!R}
\newcommand{\property}{I\!\!P}
\newcommand{\np}{\mbox{{\sc NP}}}
\newcommand{\sing}{\mbox{{\sc Sing}}}
\newcommand{\con}{\mbox{{\sc Con}}}
\newcommand{\prob}{\mbox{Prob}}
\newcommand{\atm}{\mbox{{\sc ATM}}}
\newcommand{\hopn}{\hop_{\cN}}
\newcommand{\atmn}{\atm_{\cN}}
\newcommand{\cA}{{\cal A}}
\newcommand{\cO}{{\cal O}}
\newcommand{\cP}{{\cal P}}
\newcommand{\cC}{{\cal C}}
\newcommand{\cQ}{{\cal Q}}
\newcommand{\C}{{\cal C}}
\newcommand{\cL}{{\cal L}}
\newcommand{\cB}{{\cal B}}
\newcommand{\cG}{{\cal G}}
\newcommand{\cN}{{\cal N}}
\newcommand{\cU}{{\cal U}}
\newcommand{\cF}{{\cal F}}
\newcommand{\cT}{{\cal T}}
\newcommand{\hx}{\hat{x}}
\newcommand{\cS}{{\cal S}}

\newcommand{\EE}{$\epsilon$-\-en\-ve\-lope}
\newcommand{\EQ}{$\epsilon$-\-con\-quer}
\newcommand{\ED}{{\tt En\-ve\-lo\-pe-Dis\-co\-ve\-ry}}
\newcommand{\SC}{{\tt Safe\--Con\-quer}}
\newcommand{\CE}{{\tt Con\-fir\-med-\-E\-cho}}
\newcommand{\CT}{{\tt Co\-lor\&\-Tran\-smit}}
\newcommand{\AC}{{\tt As\-sign-\-Co\-lor}}
\newcommand{\CD}{{\tt Con\-fli\-ct-\-De\-te\-ction}}
\newcommand{\CR}{{\tt Con\-fli\-ct-\-Re\-so\-lu\-tion}}
\newcommand{\SGR}{{\tt Shifted Grid-Refinement}}
\newcommand{\ERRD}{{\tt Error-Detection}}
\newcommand{\WGR}{{\tt Witnessed Grid-Refinement}}

\newcommand{\DB}{$\Delta$-block}
\newcommand{\DBs}{$\Delta$-blocks}
\newcommand{\FDB}{$5\Delta$-block}
\newcommand{\FDBs}{$5\Delta$-blocks}

\newcommand{\UB}{{\tt Universal Broadcast}}
\newcommand{\CAB}{{\tt Company-Aware Broadcast}}
\newcommand{\DI}{{\tt Dense-1}}
\newcommand{\DII}{{\tt Dense-2}}

\newcommand{\MS}{\mathcal{S}}

\newcommand{\eps}{{\epsilon}}
\newcommand{\la}{{\lambda}}
\newcommand{\al}{{\alpha}}
\newcommand{\qed}{\hfill $\square$ \smallbreak}

\newcommand{\UDGI}{{\tt UDG1}}
\newcommand{\UDGII}{{\tt UDG2}}
\newcommand{\SYM}{{\tt SYM}}

\newenvironment{proof}{\noindent{\bf Proof:}}{\qed}

\def\lalto{\left \lceil}
\def\ralto{\right \rceil}
\def\lbasso{\left \lfloor}
\def\rbasso{\right \rfloor}
\def\D{{\Delta}}
\def\qed{\hfill$\Box$}

\baselineskip    0.178in
\parskip         0.05in
\parindent       0.0in

\bibliographystyle{plain}
\title{{\bf Optimal Exploration of Terrains with Obstacles }}
\author{
Jurek Czyzowicz  \footnotemark[1]  \footnotemark[2]
\and
David Ilcinkas \footnotemark[3]
\and
Arnaud Labourel \footnotemark[1] \footnotemark[4]
\and
Andrzej Pelc \footnotemark[1] \footnotemark[5]
}
\date{\today}
\maketitle
\def\thefootnote{\fnsymbol{footnote}}

\footnotetext[1]{ \noindent
D\'epartement d'informatique, Universit\'e du Qu\'ebec en Outaouais, Gatineau,
Qu\'ebec J8X 3X7, Canada.
E-mails: {\tt jurek@uqo.ca}, {\tt labourel.arnaud@gmail.com}, {\tt pelc@uqo.ca}  
}
\footnotetext[2]{ \noindent
Partially supported by NSERC discovery grant.
}
\footnotetext[3]{ \noindent
CNRS, Universit\'{e} de Bordeaux (LaBRI),
33405 Talence, France.
E-mail: {\tt david.ilcinkas@labri.fr}
}
\footnotetext[4]{ \noindent
This work was done during this author's stay at the
Universit\'e du Qu\'{e}bec en Outaouais as a postdoctoral fellow.  
} 
\footnotetext[5]{ \noindent  
Partially supported by NSERC discovery grant and     
by the Research Chair in Distributed Computing at the
Universit\'e du Qu\'{e}bec en Outaouais. 
}

\begin{abstract}
A mobile robot represented by a point moving in the plane has to explore an unknown terrain with obstacles. Both the terrain and the obstacles are modeled as arbitrary polygons. We consider two scenarios: the {\em unlimited vision}, when the robot situated at a point $p$ of the terrain explores (sees) all points $q$ of the terrain for which the segment $pq$
belongs to the terrain, and the {\em limited vision}, when we require additionally that the distance 
between $p$ and $q$ be at most~1. All points of the terrain (except obstacles) have to be explored and the performance of an exploration algorithm is measured by the length of the trajectory of the robot.

For unlimited vision we show an exploration algorithm with complexity $O(P+D\sqrt{k})$, 
where $P$ is the total perimeter of the terrain (including perimeters of obstacles), $D$ is the diameter of the convex hull of the terrain, and $k$ is the number of obstacles.
We do not assume knowledge of these parameters. We also prove a matching lower bound showing that the above complexity is optimal, even if the terrain is known to the robot.
For limited vision we show exploration algorithms with complexity $O(P+A+\sqrt{Ak})$, where $A$ is the area of the terrain (excluding obstacles). Our algorithms work either
for arbitrary terrains, if one of the parameters $A$ or $k$ is known, or for $c$-fat terrains, where $c$ is any constant (unknown to the robot) and no additional knowledge is assumed. (A terrain $\cT$ with obstacles is $c$-fat if $R/r \leq c$, where $R$ is the radius of the smallest disc containing $\cT$ and $r$ is the radius of the largest disc contained in $\cT$.) We also prove
a matching lower bound $\Omega(P+A+\sqrt{Ak})$ on the complexity of exploration for limited vision, even if the terrain is known to the robot.

\vspace*{1cm}

\noindent {\bf keywords:} mobile robot, exploration, polygon, obstacle.

\vspace*{2cm}

\end{abstract}

\thispagestyle{empty}
\setcounter{page}{0}

\pagebreak

\section{Introduction}

{\bf The background and the problem.}
Exploring unknown terrains by mobile robots has important applications when the environment is dangerous or of difficult access for humans. Such is the situation when operating in nuclear plants or cleaning toxic wastes, as well as in the case of underwater or extra-terrestrial operations. In many cases
a robot must inspect an unknown terrain and come back to its starting point. Due to energy and cost saving requirements, the length of the robot's trajectory should be minimized.

We model the exploration problem as follows. The terrain is represented by an arbitrary polygon $\cP _0$ with pairwise disjoint polygonal obstacles $\cP _1,..., \cP _k$, included in $\cP _0$, i.e., the terrain is $\cT=\cP _0 \setminus (\cP _1\cup \cdots \cup \cP _k)$. We assume that borders of all polygons $\cP_i$ belong to the terrain. The robot is modeled as a point moving along a polygonal line inside the terrain. It should be noted that the restriction to polygons is only to simplify the description, and all our results hold in the more general case where polygons are replaced by bounded subsets of the plane homeotopic with a disc (i.e., connected and without holes) and regular enough to have well-defined area and boundary length. Every point of the trajectory of the robot is called \emph{visited}. We consider two scenarios: the {\em unlimited vision}, when the robot visiting a point $p$ of the terrain $\cT$ \emph{explores} (sees) all points $q$ for which the segment $pq$ is entirely contained in $\cT$, and the {\em limited vision}, when we require additionally that the distance between $p$ and $q$ be at most 1. In both cases the task is to explore all points of the terrain $\cT$. The cost of an exploration algorithm is measured by the length of the trajectory of the robot, which should be as small as possible. We assume that the robot does not know the terrain before starting the exploration, but it has unbounded memory and can record the portion of the terrain seen so far and the already visited portion of its trajectory.

{\bf Our results.}
For unlimited vision we show an exploration algorithm with complexity $O(P+D\sqrt{k})$, 
where $P$ is the total perimeter of the terrain (including perimeters of obstacles), $D$ is the diameter of the convex hull of the terrain, and $k$ is the number of obstacles. We do not assume knowledge of these parameters. We also prove a matching lower bound for exploration of some terrains (even if the terrain is known to the robot), showing that the above complexity is worst-case optimal.

For limited vision we show exploration algorithms
with complexity $O(P+A+\sqrt{Ak})$, where $A$ is the area of the terrain\footnote{Since parameters $D,P,A$ are positive reals that may be arbitrarily small, it is important to stress that complexity $O(P+A+\sqrt{Ak})$ means that the trajectory of the robot is at most $c(P+A+\sqrt{Ak})$, for some constant $c$ and
{\em sufficiently large} values of $P$ and $A$. Similarly for $O(P+D\sqrt{k})$. This permits to include, e.g., additive constants in the complexity, in spite of arbitrarily small parameter values.}.
Our algorithms work either for arbitrary terrains, if one of the parameters $A$ or $k$ is known, or for $c$-fat terrains, where $c$ is any constant larger than $1$ (unknown to the robot) and no additional knowledge is assumed. (A terrain $\cT$ is $c$-fat if $R/r \leq c$, where $R$ is the radius of the smallest disc containing $\cT$ and $r$ is the radius of the largest disc contained in $\cT$.) 
We also prove a matching lower bound $\Omega(P+A+\sqrt{Ak})$ on the complexity of exploration, even if the terrain is known to the robot.

The main open problem resulting from our research is whether exploration with asymptotically optimal cost $O(P+A+\sqrt{Ak})$ can be performed in arbitrary terrains without \emph{any} a priori knowledge.

{\bf Related work.}
Exploration of unknown environments by mobile robots was extensively
studied both for the unlimited and for the limited vision. Most of the research in
this domain concerns the competitive framework, where the trajectory of the robot not knowing
the environment is compared to that of the optimal exploration algorithm having full knowledge.  

One of the most
important works for unlimited vision is~\cite{DKP98}. 
The authors gave a $2$-competitive algorithm for
rectilinear polygon exploration without obstacles. 
The case of non-rectilinear polygons (without obstacles) 
was also studied in~\cite{DKP91,HIKK01}
and a competitive algorithm was given in this case.

For polygonal environments with an arbitrary number of polygonal
obstacles, it was shown in~\cite{DKP98} that no competitive strategy
exists, even if all obstacles are parallelograms. Later, this result
was improved in~\cite{AKS02} by giving a lower bound in
$\Omega(\sqrt{k})$ for the competitive ratio of any on-line algorithm
exploring a polygon with $k$ obstacles. This bound remains
true even for rectangular
obstacles. On the other hand, there exists an algorithm with
competitive ratio in $O(k)$~\cite{DKP91}.

Exploration of polygons by a robot with limited vision has been studied,
e.g., in \cite{GB01,GB03,GBBS08,IKRL00,KKMZ09,N92}. In \cite{GB01} 
the authors described an on-line algorithm with
competitive ratio $1+3(\Pi S/A)$, where $\Pi$ is a quantity depending
on the perimeter of the polygon, $S$ is the area seen by the robot, and $A$ is the area of
the polygon. The exploration in~\cite{GB01,GB03} fails on a certain type of polygons, such
as those with narrow corridors. In \cite{GBBS08}, the authors consider exploration in discrete steps.
The robot can only explore the environment when it is motionless, and the cost of the exploration algorithm is measured by the number of stops during the exploration. In \cite{IKRL00,KKMZ09}, the complexity of exploration is measured by the trajectory length, but only terrains composed of identical squares  are considered. In \cite{N92} the author studied off-line exploration of the
boundary of a terrain with limited vision.

An experimental approach was used in \cite{BLAS} to show the performance of a greedy
heuristic for exploration in which the robot always moves to the frontier between explored
and unexplored area.
Practical exploration of the environment by an actual robot was studied, e.g., in \cite{CQG,SM}.
In \cite{SM}, a technique is described to deal with obstacles that are not in the
plane of the sensor. In \cite{CQG} landmarks are used during exploration to construct
the skeleton of the environment. 

Navigation is a closely related task which consists in finding a path between two given points in a terrain with unknown obstacles. Navigation in a $n\times n$ square containing rectangular obstacles
aligned with sides of the square was considered in \cite{BBFY,BPFKRS,BRS,PY}. It was shown in \cite{BBFY} that the navigation from a corner to the center of a room can be performed with a
competitive ratio $O(\log n)$, only using tactile information (i.e., the robot modeled as a point sees an obstacle only when it touches it). No deterministic algorithm can achieve better competitive ratio, even with unlimited vision~\cite{BBFY}. For navigation between any pair of points, there is a deterministic algorithm achieving a competitive ratio of $O(\sqrt{n})$~\cite{BRS}. No deterministic
algorithm can achieve a better competitive ratio~\cite{PY}. However, there is a randomized approach performing navigation with a competitive ratio of $O(n^{\frac{4}{9}}\log
n)$~\cite{BPFKRS}.

Navigation with little information was considered in~\cite{TML}. In this model,
the robot cannot perform localization nor measure any distances or angles.
Nevertheless, the robot is able to learn the critical information contained in the
classical shortest-path roadmap and perform locally optimal navigation.

\section{Unlimited vision}

Let $S$ be a smallest square in which the terrain $\cal T$ is included. Our algorithm constructs a 
\emph{quadtree decomposition} of $S$. A quadtree is a rooted tree with each non-terminal node having four children. Each node of 
the quadtree corresponds to a square. The children of any non-terminal node $v$ correspond to four identical squares obtained by 
partitioning the square of $v$ using its horizontal and vertical symmetry axes. This implies that the squares of the terminal nodes 
form a partition of the root\footnote{In order to have an exact partition we assume that each square of the quadtree partition 
contains its East and South edges but not its West and North edges.}. More precisely,
\begin{enumerate}
\item $\{S\}$ is a quadtree decomposition of $S$
\item If $\{S_1, S_2, \dots , S_j\}$ is a quadtree decomposition of $S$, then\\ 
$\{S_1, S_2, \dots, S_{i-1}, S_{i_1}, S_{i_2}, 
S_{i_3}, S_{i_4}, S_{i+1}, \dots, S_j\}$, where $S_{i_1}, S_{i_2}, S_{i_3}, S_{i_4}$ form a partition of $S_i$ using its vertical and horizontal symmetry axes, is a quadtree decomposition of $S$
\end{enumerate}

The trajectory of the robot exploring $\cal T$ will be composed of parts which will follow the boundaries
of ${\cal P} _i$, for $0 \leq i \leq k$, and of straight-line segments, called \emph{approaching segments},  joining the boundaries of ${\cal P} _i$,  $0 \leq i \leq k$. Obviously, the end points of an approaching segment must be visible from each other. The quadtree decomposition will be dynamically constructed in a top-down manner during the exploration of $\cal T$.  At each moment of the exploration we consider the set $\cal{Q}_S$ of all squares of the current quadtree and the set $\cal{Q}_T$ of squares being the terminal nodes of the current quadtree. We will also construct dynamically a bijection $f:\{{\cal P} _0, {\cal P} _1, \dots, {\cal P} _k\} \longrightarrow \cal{Q_S} \setminus \cal{Q_T}$.

When a robot moves along the boundary of some polygon ${\cal P} _i$, it may be in one of two possible modes: the \emph{recognition mode} - when it goes around the entire boundary of a polygon without any deviation, or in the \emph{exploration mode} - when, while moving around the boundary, it tries to detect (and approach) new obstacles. When the decision to approach a new obstacle is made at 
some point $r$ of the boundary of ${\cal P} _i$ the robot moves along an approaching segment to reach the obstacle, processes it by a recursive call, and (usually much later), returning from the recursive call, it moves again along this segment in the opposite direction in order to return to point $r$ and to continue the exploration of ${\cal P} _i$. However, some newly detected obstacles may not be immediately approached. We say that, when the robot is in position $r$, an obstacle ${\cal P} _j$ is \emph{approachable}, if there exists a point $q \in {\cal P} _j$, 
belonging to a square $S_t \in \cal{Q_T}$ of diameter $D(S_t)$ such that $|rq| \leq 2D(S_t)$.
It is important to state that if exactly one obstacle becomes approachable at moment t, then it is approached immediately and if more than one obstacle become approachable at a moment $t$, 
then one of them (chosen arbitrarily) is approached immediately and the others are approached 
later, possibly from different points of the trajectory. Each time a new obstacle is visited by the robot (i.e., all the points of its boundary are visited in the recognition mode)
the terminal square of the current quadtree containing the first visited point of the new obstacle is partitioned. This square is then associated  to this obstacle by function $f$. 

The trajectory of the robot is composed of three types of sections: \emph{recognition sections}, \emph{exploration sections} and \emph{approaching sections}. The boundary of each polygon will be traversed twice: first time contiguously during a recognition section and second time through exploration sections, which may be interrupted several times in order to approach and visit newly detected obstacles. We say that an obstacle is \emph{completely explored}, if each point on the boundary of this obstacle has been traversed by an exploration section. We will prove that the sum of the lengths of the approaching sections is $O(D \sqrt{k})$.

\begin{tabbing}
{\bf Algorithm} {\tt ExpTrav} (polygon $R$, starting point $r^*$ on the boundary of $R$) \\
1 \= Make a recognition traversal of the boundary of $R$ \\
2 \= Partition square $S_t \in Q_T$ of the quadtree containing $r^*$ into four identical squares\\
3 \= $f(R) := S_t$ \\
4 \> {\bf rep}\={\bf eat}  \\
5 \>  \> Trav\=erse the boundary of $R$ until, for the current position $r$, there exists a visible point $q$\\
\> \> \>of a new obstacle $Q$ belonging to square $S_t \in \cal{Q}_T$, such that $|rq| \leq 2D(S_t)$ \\
6 \> \> Traverse the segment $rq$\\
7 \> \> ExpTrav($Q$, $q$) \\
8 \> \> Traverse the segment $qr$\\
9 \> {\bf until} $R$ is completely explored 
\end{tabbing}

Before the initial call of {\tt ExpTrav}, the robot reaches a position $r_0$ at the boundary of the polygon $\cP_0$. This is done as follows. At its initial position $v$, the robot chooses an arbitrary half-line $\alpha$ which it follows as far as possible. When it hits the boundary of a polygon $\cP$, it 
traverses the entire boundary of $\cP$. Then, it computes the point $u$ which is
the farthest point from $v$ in $\cP\cap \alpha$. It goes around $\cP$ until reaching $u$ again and progresses on $\alpha$, if possible. If this is impossible, the robot recognizes
that it went around the boundary of $\cP_0$ and it is positioned on this boundary. It initialises the quadtree decomposition to a smallest square $S$ containing $\cP_0$. This square is of size $O(D({\cal P} _0))$.
The length of the above walk is less than $3P$.

\begin{lemma}
\label{l:explorability}
Algorithm {\tt ExpTrav} visits all boundary points of all obstacles of the terrain $\cal T$.
\end{lemma}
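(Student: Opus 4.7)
The goal is to show that every boundary point of every obstacle $\cP_0,\dots,\cP_k$ is visited. Since Step~1 of each invocation \texttt{ExpTrav}($R$, $\cdot$) performs a recognition traversal of all of $R$'s boundary, the lemma reduces to showing that \texttt{ExpTrav} is eventually invoked, at some depth of recursion, on each of the polygons $\cP_0, \cP_1, \dots, \cP_k$. The outermost invocation \texttt{ExpTrav}($\cP_0, r_0$) handles $\cP_0$, so the remaining task is to do the same for every $\cP_j$ with $1 \le j \le k$.

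I would first introduce the \emph{visibility graph} $G$ on vertex set $\{\cP_0, \dots, \cP_k\}$, placing an edge between $\cP_i$ and $\cP_j$ whenever there exist points $r\in\partial\cP_i$ and $q\in\partial\cP_j$ with $\overline{rq}\subset\cT$. Using that $\cT$ is path-connected together with the standard fact that in a polygonal domain with holes the shortest path between two points is a polygonal chain whose interior vertices lie on obstacle boundaries and whose consecutive pairs are mutually visible through $\cT$, I would deduce that $G$ is connected.

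Next I would proceed by induction on $d_G(\cP_0, \cP_j)$ to show that \texttt{ExpTrav} is called on each $\cP_j$. The base case $d_G=0$ is handled above. For the inductive step, assume $\cP_j$ is $G$-adjacent to some $\cP_i$ on which \texttt{ExpTrav} has already been invoked. During \texttt{ExpTrav}($\cP_i$, $\cdot$), the exploration sections in Step~5 eventually cover all of $\partial\cP_i$, interrupted only by recursive subcalls. Pick a closest mutually visible pair $(r^*,q^*)$ with $r^*\in\partial\cP_i$, $q^*\in\partial\cP_j$, and $\overline{r^*q^*}\subset\cT$ (existing by compactness and the $G$-edge). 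When the robot reaches $r^*$ during the exploration, one of three situations holds: (i) $\cP_j$ has already been visited in some earlier recursive call (done); (ii) $\cP_j$ is currently approachable from $r^*$, so \texttt{ExpTrav}($\cP_j$, $q^*$) is invoked immediately (done); or (iii) $\cP_j$ is visible but not approachable, which forces the current terminal square $S_t$ containing $q^*$ to satisfy $D(S_t)<|r^*q^*|/2$. Case~(iii) implies that the quadtree has been refined near $q^*$ by a previous recursive call on some obstacle $\cP_m$ whose first-visited point lies in an ancestor of $S_t$ and is therefore close to $q^*$; in particular $\cP_m$ is also $G$-adjacent to $\cP_j$ with a strictly smaller visibility distance, and an induction on this distance transfers the obligation to \texttt{ExpTrav}($\cP_m$).

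The main difficulty is case~(iii): a chain of previously visited "blockers" could in principle obstruct the direct approachability of $\cP_j$ from $\cP_i$, and turning this informal cascade into a rigorous induction requires a well-founded argument. Since there are only finitely many obstacles, I would consider an earliest invocation \texttt{ExpTrav}($\cP_m$) in this chain and argue, inside that call, that $\cP_j$ must itself satisfy the approachability condition at some point of the exploration of $\partial\cP_m$, thereby triggering the invocation \texttt{ExpTrav}($\cP_j$, $\cdot$) that completes the induction.
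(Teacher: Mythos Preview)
Your approach via a visibility graph and induction on graph distance is quite different from the paper's, and the argument breaks down precisely at the point you yourself flag as ``the main difficulty.''

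In case~(iii) you assert that the obstacle $\cP_m$ responsible for refining the quadtree near $q^*$ is $G$-adjacent to $\cP_j$ with a \emph{strictly smaller} visibility distance. Neither claim is justified. The square $f(\cP_m)$ is some ancestor of $S_t$, but nothing forces it to be the immediate parent; it could be a large ancestor, in which case the first-visited point of $\cP_m$ need not be close to $q^*$ at all. Even if you pick the obstacle associated with the parent of $S_t$ (so its first-visited point is within $2D(S_t)<|r^*q^*|$ of $q^*$), the segment from that point to $q^*$ may pass through other obstacles, so $\cP_m$ and $\cP_j$ need not be $G$-adjacent. Your fallback plan in the last paragraph (``consider an earliest invocation in this chain'') is not an argument: you still have to explain why $\cP_j$ ever becomes approachable from $\partial\cP_m$, and nothing you have set up guarantees this. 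Finally, the induction parameter has silently shifted from $d_G(\cP_0,\cP_j)$ to some geometric distance, so the outer induction does not close.

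The paper bypasses all of this with a clean extremal argument on the \emph{final} quadtree. Assume some boundary point is unvisited and choose one, $p$, whose terminal square $S_j$ has \emph{minimal} diameter among terminal squares containing unvisited boundary points. The parent $S_m$ of $S_j$ was split because some obstacle $\cP'$ was visited with its first point $q\in S_m$; hence $|qp|\le D(S_m)=2D(S_j)$. Walking along the segment $qp$ from the visited endpoint to the unvisited one, there is a consecutive pair $q',p'$ on $\partial\cT$ with $q'$ visited, $p'$ not, and $q'p'$ unobstructed (using that each polygon's boundary is either fully visited or fully unvisited). By the minimality of $D(S_j)$, the terminal square containing $p'$ has diameter at least $D(S_j)$, so $|q'p'|\le 2D(S_j)\le 2D(S_i)$ and $p'$ was approachable from $q'$, a contradiction. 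The key device you are missing is this minimality over terminal-square diameters, which simultaneously controls the segment length \emph{and} the approachability threshold at the unvisited endpoint.
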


\begin{proof}
Note that Algorithm {\tt ExpTrav} always 
terminates. Indeed, since there is a finite number of obstacles, there is a finite number of calls of {\tt ExpTrav} and steps 5-8 are executed a finite number of times. Moreover, since each obstacle has a finite boundary, step 1 and the repeat loop are always completed.

Consider the quadtree decomposition ${\cal Q}$ of $S$ arising at the completion of the algorithm. 
Suppose, for contradiction, that there exists a point on the boundary of an obstacle which was never visited. Let $p$ be a point among all unvisited boundary points for which the terminal square $S_j$ belonging to the quadtree $Q$ has the smallest possible diameter. Consider the square $S_m$, the parent of $S_j$ in ${\cal Q}$. $S_m$ was partitioned in step 2 of some call of {\tt ExpTrav} as a result of detecting some obstacle ${\cal P}'$ intersecting $S_m$. Consider the segment $qp$, where $q \in S_m$ belongs to the boundary of ${\cal P}'$. Since both points $p$ and $q$ belong to the boundary of $\cT$ and $q$ was visited while $p$
was not, there exists a pair of points $q'$ and $p'$, both belonging to the boundary of $\cT$ and to the segment $qp$, such that $q'$ was visited, $p'$ was not and $p'$ is visible from $q'$. Such a pair exists because at the end of the exploration the boundary of each polygon is either entirely visited or not at all.
Consider the quadtree at the moment $t$ when the robot visited point $q'$, and its terminal square $S_i$ containing point $p'$. Clearly, $D(S_i)\geq D(S_j)$, because $S_j$ is a square with the smallest 
diameter containing unvisited boundary points. Hence 
$|q'p'|\leq |qp|\leq D(S_m)=2D(S_j)\leq 2D(S_i)$ and $p'$ was approachable from $q'$ at time $t$,
a contradiction.
\end{proof}

\begin{lemma}
\label{l:bijection}
Function $f$ is a bijection from $\{{\cal P} _0, {\cal P} _1, \dots, {\cal P} _k\}$ to $\cal{Q_S} \setminus \cal{Q_T}$,
where $\cal{Q_S}$ and $\cal{Q_T}$ correspond to the final quadtree decomposition produced by Algorithm {\tt ExpTrav}.
\end{lemma}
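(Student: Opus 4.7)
The plan is to track exactly how $f$ is constructed during the execution and to verify each of the three needed properties (well-defined on the whole domain, injective, and surjective) in turn.

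First I would check that $f$ is defined on every obstacle. The only place $f(R)$ is assigned a value is line~3, which executes inside a call to \texttt{ExpTrav}$(R,r^*)$. So I need to argue that \texttt{ExpTrav} is called exactly once on each polygon $\cP_0,\dots,\cP_k$. The initial invocation (on $\cP_0$) is explicit. For the remaining obstacles, Lemma~\ref{l:explorability} guarantees that every boundary point of every obstacle is visited, and the only way a boundary point of $\cP_i$ becomes visited is during a recognition or exploration section of a call to \texttt{ExpTrav}$(\cP_i,\cdot)$. Hence each $\cP_i$ triggers at least one such call. Uniqueness follows from the condition in line~5, which restricts the triggering of a recursive call to a visible point of a \emph{new} obstacle: once $\cP_i$ has been approached and its boundary recognized in line~1, it is no longer new, so no further recursive call on it ever occurs. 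Consequently, $f$ is assigned exactly once on each of the $k+1$ obstacles.

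Next I would prove injectivity. Suppose $f(R)=f(R')=S_t$ for two distinct obstacles $R\neq R'$. Then line~2 was executed twice on the same square $S_t$. But line~2 requires $S_t\in\cQ_T$ at the moment of execution, and the immediate effect of line~2 is to subdivide $S_t$, thereby removing it from $\cQ_T$ (and placing its four children there instead). Once a square ceases to be terminal, no later execution of the algorithm ever reinstates it in $\cQ_T$: line~2 only refines squares, never merges them. So $S_t$ could only have been partitioned once, giving a contradiction. Thus each call of \texttt{ExpTrav} partitions a distinct square, and $f$ is injective.

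For surjectivity, I would argue in the opposite direction: every element of $\cQ_S\setminus\cQ_T$ in the final quadtree came into existence as a non-terminal node via some execution of line~2, which is immediately followed by the corresponding $f(R):=S_t$ on line~3. Hence every non-terminal square of the final quadtree equals $f(R)$ for some obstacle $R$. Combined with the previous step, this yields a bijection between the $k+1$ obstacles and the $k+1$ internal nodes of the final quadtree. There is no serious obstacle in the argument; the only subtle point is the observation that between the detection of a new obstacle in line~5 and the corresponding line~2 inside the recursive call, no partition of the terminal square containing $q$ can occur (line~6 merely traverses a straight segment), which is what guarantees that the $S_t\in\cQ_T$ required by line~2 indeed exists at the moment of the recursive call.
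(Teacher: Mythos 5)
Your proof is correct and follows essentially the same route as the paper: track the dynamic construction of $f$, note that each assignment $f(R):=S_t$ turns a terminal square into a non-terminal one (never to be terminal again), and that every non-terminal node of the final quadtree arises from exactly such a step. The only difference is that you spell out well-definedness explicitly via Lemma~\ref{l:explorability} and the ``new obstacle'' condition in step~5, a point the paper's induction leaves implicit, so your version is if anything slightly more careful.
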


\begin{proof}
When {\tt ExpTrav} is called for the first time, the robot is on the boundary of ${\cal P} _0$ and the quadtree has exactly one 
non-terminal node - its root $S$, and $f({\cal P} _0)=S$. By induction, each time a new obstacle $Q$ is approached in step 6 of a 
call of {\tt ExpTrav}, $f(Q)$ is set to some $S_t$ intersecting $Q$ and $S_t$ becomes a nonterminal node of the quadtree in step 2. Hence 
each square corresponding to a non-terminal node of the quadtree is an image of a different polygon ${\cal P} _i$,  $0 \leq i \leq 
k$.
\end{proof}

\begin{lemma}
\label{l:quadtree-complexity}
For any quadtree $T$,  rooted at a square of diameter $D$ and having $x$ non-terminal nodes, the sum $\sigma(T)$ of diameters of these nodes is at most $2D\sqrt{x}$.
\end{lemma}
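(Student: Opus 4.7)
The plan is to translate the bound into an inequality about level-wise counts and then identify the extremal configuration. Let $n_i$ denote the number of non-terminal nodes of $T$ at level $i$, with the root at level $0$. Since a level-$i$ square has diameter $D/2^i$, we have
\[
\sigma(T) = D \sum_{i \geq 0} \frac{n_i}{2^i}, \qquad x = \sum_{i \geq 0} n_i.
\]
I would first observe that the set of non-terminal nodes is ancestor-closed: if a node is non-terminal then it has children, so its parent is non-terminal. Consequently $n_i \leq 4 n_{i-1}$ for $i \geq 1$ and $n_0 \leq 1$, which iterates to $n_i \leq 4^i$.

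The main structural step is an exchange argument showing that $\sum_i n_i / 2^i$ is maximized, subject to these constraints and $\sum_i n_i = x$, by a ``top-loaded'' sequence: there exist $h \geq 0$ and $r \in [1, 4^h]$ with $n_i = 4^i$ for $i < h$, $n_h = r$, and $n_i = 0$ for $i > h$. To justify this, I would let $h$ be the deepest non-empty level; if $n_i < 4 n_{i-1}$ for some $i < h$, then moving one node from level $h$ to level $i$ preserves every constraint (the constraint at level $h+1$ is vacuous since $n_{h+1}=0$, and the constraint at level $i+1$ only loosens) while strictly increasing the objective by $1/2^i - 1/2^h > 0$, contradicting optimality.

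It would then remain to verify the inequality on this extremal configuration. Writing $a = 2^h$, one gets $\sigma(T)/D = (a-1) + r/a$ and $x = (a^2-1)/3 + r$, so the squared inequality $\sigma(T)^2 \leq 4D^2 x$ reduces, after a short algebraic manipulation, to
\[
\frac{(a-1)(a+7)}{3} + \frac{2(a+1)\,r}{a} - \frac{r^2}{a^2} \;\geq\; 0,
\]
which is a concave quadratic in $r$ and therefore attains its minimum on $[0, a^2]$ at an endpoint; both endpoint values are manifestly non-negative for all integers $a \geq 1$. I expect the exchange argument to be the main subtlety, because one must respect the stronger constraint $n_i \leq 4 n_{i-1}$ rather than merely $n_i \leq 4^i$; always picking the donor at the deepest non-empty level sidesteps this, since then the constraint immediately below the donor is automatically preserved.
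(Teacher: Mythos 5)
Your proof is correct and follows essentially the same strategy as the paper: an exchange argument showing the extremal configuration is a quadtree complete down to some depth $h$ with a partial layer of non-terminal nodes at depth $h$, followed by the same algebraic verification (your parameters $a=2^h$, $r$ correspond to the paper's $2^t$ and $y/4$). The only difference is cosmetic --- you phrase the exchange on the level-count sequence under the relaxed constraints $n_0\leq 1$, $n_i\leq 4n_{i-1}$, whereas the paper swaps a shallow terminal node with a deep non-terminal subtree directly in the tree --- and both handle the constraint subtlety you flag.
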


\begin{proof}
The diameter of a square at depth $t$ of the quadtree is $\frac{D}{2^{t}}$. We prove first that among all quadtrees with $x$ 
non-terminal nodes, $\sigma(T)$ is maximized for the quadtree having all terminal nodes of at most two consecutive depths. Suppose, 
to the contrary, that there exists a quadtree $T$ maximizing $\sigma(T)$ having terminal nodes of two different depths $t_1$ and 
$t_2$ with $t_1 < t_2 - 1 $. Let $p$ be a terminal node of $T$ of depth $t_1$ and $q$ be a non-terminal node of depth $t_2-1$. Let $T'$ be  the tree obtained from $T$ by detaching from $T$ node $p \in T$ and the subtree of $T$ rooted at $q$ and 
exchanging their places. $T'$ is again a quadtree with $x$ non-terminal nodes, having one less non-terminal node of depth $t_2-1$ 
and one extra non-terminal node of depth $t_1$. Hence $\sigma(T') \geq \sigma(T) - \frac{D}{2^{t_2-1}}+\frac{D}{2^{t_1}} > \sigma(T)$, 
which contradicts the maximality of $\sigma(T)$.

Therefore it is sufficient to consider only quadtrees having terminal nodes of at most two consecutive depths $t$ and $t+1$. Suppose that there 
are $y$ terminal nodes of depth $t+1$, $0\leq y \leq 4^{t+1}$. Then the number $x$ of non-terminal nodes equals $\frac{4^{t} - 1}{3} + \frac{y}{4}$ and 
\[ \sigma(T)=\frac{y}{4}\frac{D}{2^{t}}+\sum_{i=0}^{t-1} 4^{i}\frac{D}{2^{i}} \]
\[ = D\left( 2^{t}-1 + \frac{y}{2^{t+2}}\right) \]

We need to prove that $\sigma(T) \leq 2D\sqrt{x}$, i.e. that
\[ D\left( 2^t-1+\frac{y}{2^{t+2}}\right) \leq 2D\sqrt{\frac{4^t-1}{3}+\frac{y}{4}}\]
Hence it is sufficient to show that
\[ \left( 2^t-1+\frac{y}{2^{t+2}}\right)^2 \leq 4\left( \frac{4^t-1}{3}+\frac{y}{4}\right)\]
i.e., that
\[0 \leq y\left(1+\frac{1}{2^{t+1}}-\frac{1}{2}-\frac{y}{4^{t+2}}\right)+\left(4\cdot 
\frac{4^t-1}{3}-2^{2t}-1+2^{t+1}\right) \]
The second term is clearly positive for $t \geq 0$ and the first term is also positive since $0\leq y \leq 4^{t+1}$. We conclude that $\sigma(T) \leq 2D\sqrt{x}$.
\end{proof}

\begin{theorem}
\label{t:exploration-correctness}
Algorithm {\tt ExpTrav} explores the terrain $\cal T$ of perimeter $P$ and convex hull diameter $D$ with $k$ obstacles in time $O(P+D\sqrt{k})$.
\end{theorem}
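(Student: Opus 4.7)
The plan is to decompose the robot's trajectory into four parts and bound each separately: (i) the preliminary walk from the starting position to a point $r_0$ on the boundary of $\cP_0$, (ii) the recognition traversals (step~1 of each call to {\tt ExpTrav}), (iii) the boundary re-traversals inside the repeat loop (the exploration sections), and (iv) the approaching segments (steps~6 and~8). Pieces (i)--(iii) contribute only $O(P)$: the preliminary walk is at most $3P$, as noted just before Lemma~\ref{l:explorability}, while each polygon boundary among $\cP_0, \cP_1, \ldots, \cP_k$ is walked exactly once during its recognition phase and, in total, exactly once through its exploration sections (possibly broken into several pieces by approaches). This accounts for at most $5P$.

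The core of the proof is showing that the approaching segments total $O(D\sqrt{k})$. I will charge each approach to the non-terminal quadtree square it creates. When a new obstacle $Q$ is approached inside some call of {\tt ExpTrav}, the segment from the current position $r$ to a visible point $q\in Q$ has length at most $2D(S_t)$, where $S_t \in {\cal Q}_T$ is the current terminal square containing $q$; immediately afterwards, step~2 of the recursive call partitions $S_t$, so $S_t$ becomes the non-terminal square $f(Q)$, whose diameter is frozen at $D(S_t)$. Since each approaching segment is traversed twice (going in step~6 and returning in step~8), the approach of $Q$ costs at most $4D(f(Q))$.

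Summing over the $k$ approached obstacles $\cP_1,\ldots,\cP_k$ (obstacle $\cP_0$ is reached by the preliminary walk rather than by an approach), the total approach cost is at most $4\sum_{i=1}^{k} D(f(\cP_i))$, which is bounded by $4\sigma(T)$, where $T$ is the final quadtree rooted at $S$. By Lemma~\ref{l:bijection}, $T$ has exactly $k+1$ non-terminal nodes; since $D(S) = O(D)$, Lemma~\ref{l:quadtree-complexity} yields $\sigma(T) \leq 2D(S)\sqrt{k+1} = O(D\sqrt{k})$, as desired. Combining the four contributions gives the claimed $O(P + D\sqrt{k})$ running time.

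It remains to argue that the algorithm truly explores every point of $\cT$, not merely all boundary points. Lemma~\ref{l:explorability} handles the boundary; for any interior point $p \in \cT$, shooting any ray from $p$ until it first meets the boundary of $\cT$ yields a point $b$ with $\overline{pb} \subset \cT$, so under unlimited vision $p$ is seen from $b$ when the robot visits $b$. The only genuinely delicate step is the charging scheme in the second paragraph: once each approach is matched with the terminal square it is about to partition, Lemmas~\ref{l:bijection} and~\ref{l:quadtree-complexity} combine almost mechanically to produce the $O(D\sqrt{k})$ bound.
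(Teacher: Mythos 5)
Your proof is correct and follows essentially the same route as the paper: boundary traversals (preliminary walk plus recognition and exploration sections) cost at most $5P$, and the approaching segments are charged, via the bijection of Lemma~\ref{l:bijection}, to the non-terminal quadtree squares and bounded by $4\sigma(T)=O(D\sqrt{k})$ using Lemma~\ref{l:quadtree-complexity}, with the same ray argument for interior points. The only cosmetic remark is the case $k=0$, where $\sqrt{k+1}=O(\sqrt{k})$ fails but your charging sum is empty anyway (the paper treats this case separately with the same observation).
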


\begin{proof}
Take an arbitrary point $p$ inside  $\cal T$ and a ray outgoing from $p$ in an arbitrary direction. This ray reaches the boundary 
of  $\cal T$ at some point $q$. Since, by Lemma \ref{l:explorability} point $q$ was visited by the robot, $p$ was visible from $q$ during the robot's traversal, and hence $p$ was explored.

To prove the complexity of the algorithm, observe that the robot traverses twice the boundary of each polygon of $\cal T$, once during its recognition in step 1 and the second time during the iterations of step 5. Hence the sum of lengths of the recognition and exploration sections is $2P$. The only other portions of the trajectory are produced in steps 6 and 8, when the obstacles are 
approached and returned from. According to the condition from step 5, an approaching segment is traversed in step 6 only if its length is shorter than twice the diameter 
of the associated square. If $k=0$ then the sum of lengths of all approaching segments is $0$, due to the fact that exploration starts at the external boundary of the terrain. In this case the length of the trajectory is at most $5P$. Hence we may assume that $k>0$. By Lemma \ref{l:bijection} each obstacle is associated with a different non-terminal node of the quadtree and  the number $x$ of non-terminal nodes of the quadtree equals $k+1$. Hence the sum of lengths of all approaching segments is at most $2\sigma(T)$. By Lemma \ref{l:quadtree-complexity} we have $\sigma(T) \leq 2D\sqrt{x}=2D\sqrt{k+1}$,
hence the sum of lengths of approaching segments is at most $2\sigma(T) \leq 4D\sqrt{k+1}\leq 4D\sqrt{2k}\leq 6D\sqrt{k}$. Each segment is traversed twice, so the total length of this part of the trajectory is at most  $12D\sqrt{k}$. It follows that the total length of the trajectory is at most $5P+12D\sqrt{k}$.
\end{proof}
\begin{theorem}
\label{t:lower-unlimited}
Any algorithm for a robot with unlimited visibility, exploring polygonal terrains with $k$ obstacles, having total perimeter $P$ and the convex hull diameter $D$, produces trajectories in $\Omega(P+D\sqrt{k})$ in some terrains, even if the terrain is known to the robot.
\end{theorem}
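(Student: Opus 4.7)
I would prove the bound by combining two separate constructions, one establishing $\Omega(P)$ and the other $\Omega(D\sqrt{k})$; since $\Omega(P+D\sqrt{k}) = \Omega(\max(P, D\sqrt{k}))$, this suffices.

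For $\Omega(D\sqrt{k})$, take $\cP_0$ to be a $D\times D$ square containing $k=n^{2}$ small obstacles placed on a regular $n\times n$ grid of spacing $s=D/n=D/\sqrt{k}$. Each obstacle (for instance a small horseshoe together with a tiny blocking shape just in front of its mouth) is engineered so that one of its boundary points is visible only from viewpoints lying within distance $s/4$ of the obstacle itself. Any exploration trajectory $\gamma$ of length $L$ must then contain, for each of the $k$ obstacles, a point $q_i$ within distance $s/4$ of the corresponding grid-centre $p_i$. Ordering the $q_i$ by the order in which $\gamma$ visits them and applying the triangle inequality on each sub-arc gives
$$L \;\geq\; \sum_{i=1}^{k-1}|q_i q_{i+1}| \;\geq\; (k-1)\cdot(s - 2\cdot s/4) \;=\; (k-1)\cdot s/2 \;=\; \Omega(D\sqrt{k}),$$
using that distinct grid-centres are pairwise at distance $\geq s$. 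The total perimeter of this family is $O(D)$ for sufficiently small obstacles, so $\Omega(D\sqrt{k})$ is the binding lower bound term on these terrains.

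For $\Omega(P)$, take $\cP_0$ to be a thin serpentine (zigzag) corridor of width $w$ and total spine length $\ell$, with no obstacles, so that $P=\Theta(\ell)$ and $k=0$. The corridor is designed so that every straight segment between consecutive turns has length $\Theta(w)$, which means that from any single viewpoint the robot's line of sight is cut off by walls after at most a constant number of segments, covering only $O(w)$ of spine-length at a time. A monotonicity argument along the spine then shows that during a trajectory of length $L$ the total spine-length ever explored is $O(L+w)$, and forcing the whole corridor of length $\ell$ to be explored yields $L=\Omega(\ell)=\Omega(P)$.

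The main obstacle is to engineer, in the first construction, an obstacle whose hidden boundary point is really invisible from far away despite the robot's unlimited vision. A priori the robot could look at the hidden point along a long sight-line threading between distant grid obstacles. The remedy is to equip each grid position with a small extra blocking obstacle placed within distance $s/4$ of the horseshoe's opening; this at most doubles the total number of obstacles and preserves the $\Omega(D\sqrt{k})$ bound up to a constant factor. The serpentine family for $\Omega(P)$ is by contrast an elementary visibility exercise along a narrow winding corridor.
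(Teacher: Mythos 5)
Your proposal is correct and follows essentially the same route as the paper: a $\sqrt{k}\times\sqrt{k}$ grid of tiny obstacles engineered so that a small region near each can only be explored from close by (forcing $\Omega(D\sqrt{k})$ with $P=\Theta(D)$), plus a narrow-corridor polygon with $k=0$ forcing $\Omega(P)$. Your horseshoe-plus-blocker gadget is just an explicit variant of the paper's non-convex obstacle whose hidden pocket requires entering its convex hull, and the doubling of $k$ only affects constants.
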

\begin{proof}
In order to prove the lower bound, we show two families of terrains: one for which $P\in\Theta(D)$
($P$ cannot be smaller), $D$ and $k$ are unbounded and still the exploration cost is $\Omega(D\sqrt{k})$, and the other in which $P$ is unbounded, $D$ is arbitrarily small, $k=0$ and still the exploration cost is $\Omega(P)$.

Consider the terrain from Figure \ref{fig:lower-unlimited}(a) where $k$ identical tiny obstacles are distributed evenly at the $\sqrt{k} \times \sqrt{k}$ grid positions inside a square of diameter $D$. The distance between obstacles is at least $\frac{D\sqrt{2}}{2(\sqrt{k}+1)}-\epsilon$ where 
$\epsilon > 0$ may be as small as necessary by choosing obstacles  sufficiently small. The obstacles are such that to explore the small area inside the convex hull of the obstacle the robot must enter this convex hull. Since each such area must be explored, the trajectory of the robot must be of size at least $(k-1)\left(\frac{D\sqrt{2}}{2(\sqrt{k}+1)}-\epsilon\right)$, which is clearly in $\Omega(D\sqrt{k})$. Note that the perimeter $P$ is in $\Theta(D)$.

The terrain from Figure \ref{fig:lower-unlimited}(b) is a polygon of arbitrarily small diameter (without obstacles), whose exploration requires a trajectory of size $\Omega(P)$, where $P$ is unbounded. Indeed, each "corridor" must be traversed almost completely to explore points at its end. Hence the two families of polygons from Figure \ref{fig:lower-unlimited} lead to the $\Omega(P+D\sqrt{k})$ lower bound.
\end{proof}
\begin{figure}[h]
	\begin{center}
	\includegraphics{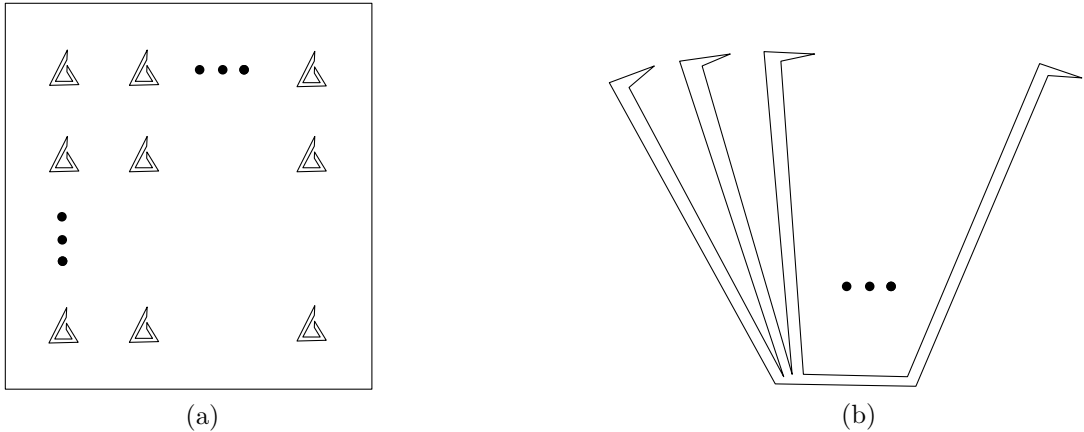}
	\caption{Lower bound for unlimited visiblity}
	\label{fig:lower-unlimited}
	\end{center}
\end{figure}

\section{Limited vision}

In this section we assume that the vision of the robot has range 1.
The following algorithm is at the root of all our positive results on exploration
with limited vision. The idea of the algorithm is to partition the environment
into small parts called \emph{cells} (of diameter at most $1$) and to visit them using a depth-first traversal. 
The local exploration of cells can be performed using Algorithm {\tt ExpTrav},
since the vision inside each cell is not limited by the range $1$ of the vision of 
the robot. The main novelty of our exploration algorithm is that the robot completely
explores \emph{any} terrain. This should be contrasted with previous algorithms with limited 
visibility, e.g. \cite{GB01,GB03,IKRL00,KKMZ09} in which only a particular class of terrains with obstacles is explored, e.g., terrains without narrow corridors or terrains composed of complete identical squares. This can be done at cost $O(A)$. Our lower bound shows that exploration complexity of arbitrary terrains depends on the perimeter and the number of obstacles as well. The complete exploration of arbitrary terrains achieved by our algorithm significantly complicates both the exploration process and its analysis. 

\noindent
{\bf Algorithm $\algo$} ($\alg$, for short)\\
INPUT: A point $s$ inside the terrain $\cT$ and a positive real $F\leq \sqrt{2}/2$.\\
OUTPUT: An exploration trajectory of $\cT$, starting and ending at~$s$.

\noindent
Tile the area with squares of side $F$, such that $s$ is on the boundary of a square. The connected regions obtained as intersections of $\cT$ with each tile
are called \emph{cells}. For each tile $S$, maintain a quadtree decomposition $Q_S$
initially set to $\{S\}$. Then, arbitrarily choose one of the cells containing $s$ to be the starting cell $C$ and call $\explo$($C$, $s$).
\begin{tabbing}
{\bf Procedure} $\explo$(current cell $C$, starting point $r^*\in C$) \\
1 \= Record $C$ as visited\\
2 \= {\tt ExpTrav}($C$,$r^*$) using the quadtree decomposition $Q_S$, 
where $S$ is the tile containing $C$\\
3 \textbf{rep}\=\textbf{eat} \\
4 \> Traverse the boundary of $C$ until the current position $r$ belongs to an unvisited cell $U$\\
5 \> $\explo$($U$, $r$)\\
\> (if $r$ is in several unvisited cells, choose arbitrarily the first cell to be processed)\\
6 \textbf{until} the boundary of $C$ is completely traversed
\end{tabbing}
It is worth to note that, at the beginning of the exploration of the first cell belonging to a tile $S$, 
the quadtree of this tile is set to a single node. However, at the beginning of explorations of subsequent  
cells belonging to $S$, the quadtree of $S$ may be different. So the top-down construction of this quadtree may be spread over the exploration of many cells which will be visited at different points in time.

Consider a tile $T$ and a cell $C\subseteq T$. Let $A_C$ be the area of $C$, $R_C$ be the length of the part of the boundary of $C$ issued from the boundary of $T$, and $P_C$ be the length of the part of the boundary of $C$ issued from the boundary of $\cT$.
\begin{lemma}\label{lem:c}
There is a positive constant $c$, such that $R_C\leq c(A_C/F+P_C)$, for any cell $C$.
\end{lemma}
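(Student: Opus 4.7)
The plan is to fix a tile $T$ of side $F$ containing the cell $C$, place coordinates so that $T=[0,F]^{2}$, and bound, separately for each of the four edges $e$ of $T$, the length $R_C^{e}$ of $\partial C\cap e$ by $4A_C/F + P_C$. Summing over the four edges then yields the inequality with a universal constant such as $c=16$. By the obvious symmetry it suffices to treat the bottom edge $e=\{(x,0):0\leq x\leq F\}$ and the set $I:=\partial C\cap e$ of $1$-measure $R_C^{e}$.

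I would split $I$ into a \emph{deep} part $I_d$, consisting of those $x$ for which the entire vertical segment from $(x,0)$ to $(x,F/4)$ lies in $\overline{C}$, and a \emph{shallow} part $I_s:=I\setminus I_d$. For $x\in I_d$ the union of these vertical segments is a subset of $\overline{C}$ whose area is, by Fubini, equal to $(F/4)\,|I_d|$, so $|I_d|\leq 4A_C/F$. For $x\in I_s$ I would set
\[
h(x)\;:=\;\inf\{y\in(0,F/4]:(x,y)\notin\overline{C}\}\ \in\ [0,F/4],
\]
which produces a point $(x,h(x))\in\partial C$. Since $x\in(0,F)$ and $h(x)<F/4<F$, this boundary point lies strictly inside $T$, so it cannot come from $\partial T$ and must therefore come from $\partial\cT$; that is, $(x,h(x))\in\partial\cT\cap\overline{C}$. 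The map $x\mapsto(x,h(x))$ is injective (it preserves the first coordinate), and vertical projection onto the $x$-axis is $1$-Lipschitz, so the $1$-dimensional measure of $I_s$ is at most the $1$-dimensional measure of the image, which is at most the length of $\partial\cT\cap\overline{C}$, namely $P_C$. Combining the two bounds and summing over the four edges gives $R_C\leq 16A_C/F+4P_C$.

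The only genuine technicality is the corner case in which $h(x)=0$, meaning that the vertical ray from $(x,0)$ leaves $\overline{C}$ immediately; then the projection trick degenerates since $(x,h(x))=(x,0)$ is already on $e$. However, in that case $(x,0)$ itself lies on $\partial\cT\cap e$, and the set of such $x$ has $1$-measure at most the length of the portion of $\partial\cT\cap\overline{C}$ that coincides with $e$, hence again at most $P_C$, and can be absorbed into the shallow bound at the cost of replacing $4$ with $8$ in the second term. Everything else is routine: the sets $I_d$ and $I_s$ are measurable because $\overline{C}$ is closed, and $\partial\cT\cap\overline{C}$ is a rectifiable (polygonal) curve, so the Lipschitz-projection length comparison is valid. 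Reassembling the contributions from the four edges yields the claim with an absolute constant $c$.
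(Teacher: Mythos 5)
Your proof is correct, but it follows a genuinely different route from the paper. The paper argues by a case distinction on the sizes of $P_C$ and $A_C$ relative to $F$: when either $P_C\geq F/2$ or $A_C\geq F^2/2$ it simply bounds $R_C$ by the tile perimeter $4F$, and in the remaining case it classifies the \emph{borderlines} of the cell, applies the isoperimetric inequality to show that some borderline must enclose the cell in a region of small area, and deduces $R_C\leq 3P_C$; this yields the constant $8$. You instead give a direct charging argument, edge by edge of the tile: points of $\partial C$ on an edge whose inward vertical segment of length $F/4$ stays in $\overline{C}$ are charged to area via Fubini (giving $4A_C/F$ per edge), and the remaining points are charged, via the $1$-Lipschitz projection of the first exit point $(x,h(x))$, to the terrain-boundary portion of $\partial C$ (giving $O(P_C)$ per edge). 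The two key facts you use implicitly or briefly --- that $\partial C\subseteq\partial T\cup\partial\cT$, so an exit point interior to the tile lies on $\partial\cT$, and that in the degenerate case $h(x)=0$ the point $(x,0)$ itself lies on $\partial\cT$ (which follows since $\cT$ is closed and a full neighbourhood of an interior point of $\cT$ meeting $C$ would be absorbed into $C$) --- are true and easy to verify, and Hausdorff outer measure makes the Lipschitz comparison legitimate without measurability worries. Your approach avoids the isoperimetric inequality and any structural analysis of borderlines, so it is more elementary and extends verbatim to non-polygonal rectifiable boundaries; the paper's approach is shorter to state and gives a slightly better constant ($8$ versus your $16$), but the constant is immaterial for the lemma's use.
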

\begin{proof}
We consider three cases:

\noindent\textbf{Case 1:} $P_C< F/2$ and $A_C< F^2/2$

In this case, we will show that there is a positive constant $c$ such that $R_C\leq c\cdot P_C$.
We call a \emph{borderline} a maximum connected part of the boundary of $\cT$ inside the tile $T$ delimiting the cell $C$. Let $\cL=\{L_1,\ldots,L_l\}$ be the set of borderlines of $C$. There are two types of borderlines: the \emph{linking} borderlines that link two points of the boundary of $T$ and the \emph{closed} borderlines that are closed polygonal lines inside $S$. A borderline $L$ separates the tile $S$ into two connected regions, the \emph{inside} region denoted by $I_L$, i.e., the region containing $C$, and the \emph{outside} region, denoted by $O_L$. If the area of $I_L$ is smaller than that of $O_L$, then $L$ is a \emph{small-inside} borderline, otherwise $L$ is a \emph{large-inside} borderline. We denote by $M_L$ the region among $I_L$ and $O_L$ which has the smaller area.

Notice that the two endpoints of a \emph{linking} borderline can either be on the same side of $S$ or on two adjacent sides. Indeed, any borderline linking two points on opposite sides of $S$ would have a length at least $F$, a contradiction with the inequality $P_C<F/2$. If $L$ is a linking borderline with both endpoints $x$ and $y$ on the same side of $S$, then the length of segment $xy$ is smaller than $|L|$. Hence, the perimeter of $M_L$ is smaller than $2\cdot |L|$. If $L$ is a linking borderline with endpoints $x$ and $y$ on two sides that intersect at a vertex $v$, then the lengths of segments $vx$ and $vy$ are both less than $|L|$. Therefore the perimeter of $M_L$ is smaller than $3\cdot |L|$. If $L$ is a closed borderline, then the perimeter of $M_L$ is exactly $L$. Hence, the perimeter of $M_L$ is always less than $3\cdot |L|$. Moreover, the area of $M_L$ is less than $(3|L|)^2/4\pi$ by the isoperimetric inequality~\cite{Oss78}.

Now we are ready to show that at least one borderline in $\cL$ is a small-inside borderline. 
Suppose, for contradiction that, for all $i=1,\ldots,l$, the borderline $L_i$ is a large-inside borderline. We have $C=S\setminus \bigcup_{i=1}^{l} M_{L_i}$. It follows that:
\begin{eqnarray*}
A_C & = & \area(S)- \sum_{i=1}^{l} \area(M_{L_i})\\
&\geq & F^2-\sum_{i=1}^{l} \frac{(3|L_i|)^2}{4\pi}\quad\mbox{since }(3|L_i|)^2/4\pi\geq \area(M_{L_i})\mbox{ for all }i=1,\ldots,l\\
&\geq & F^2-\frac{9P_C^2}{4\pi}\quad\mbox{since }P_C^2= \left(\sum_{i=1}^{l} |L_i|\right)^2\geq \sum_{i=1}^{l} |L_i|^2\\
&>& \frac{F^2}{2}\quad\mbox{since }F/2> P_C.
\end{eqnarray*}
We obtain $A_C>F^2/2$, a contradiction. This shows that there exists a small-inside borderline
$L\in\cL$. We have $C\subseteq M_{L}$ and thus $R_C< 3 |L|\leq 3P_C$.

\textbf{Case 2:} $P_C< F/2$ and $A_C\geq F^2/2$ 

We have:
$$ R_C  \leq  4F \leq  \frac{8}{F} A_C\quad\mbox{since }F\leq 2A_C/F.$$
\textbf{Case 3:} $P_C\geq F/2$ 

We have:
$$R_C \leq 4F \leq 8P_C\quad\mbox{since }F\leq 2P_C.$$
In all cases, we have $R_C\leq 8(A_C/F+P_C)$.
\end{proof}

The following is the key lemma for all upper bounds proved in this section. Let $\cS=\{T_1,T_2,\dots, T_n\}$ be the set of tiles with non-empty intersection with $\cT$ and  $\cC=\{C_{1},C_{2},\cdots,C_{m}\}$ be the set of cells that are intersections of tiles from $\cS$ with $\cT$. For each $T\in\cS$, let $k_T$ be the number of obstacles of $\cT$ entirely contained in $T$.

\begin{lemma}\label{lem:algo}
For any $F\leq \sqrt{2}/2$, Algorithm $\alg$ explores the terrain $\cT$ of area $A$ and perimeter $P$, using a trajectory of length $O(P+A/F+F\sum_{i=1}^{n}\sqrt{k_{T_i}})$.
\end{lemma}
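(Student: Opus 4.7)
My plan is to decompose the trajectory of Algorithm $\alg$ into recognition, exploration, approaching segments, and cell-boundary traversals in $\explo$, and bound each part using the lemmas already established.

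For correctness, the outer $\explo$ performs a depth-first search on the adjacency graph of cells: any two neighbouring cells share boundary points that the robot encounters while walking $\partial C$ in line~4, so every cell of $\cT$ is visited. Within each cell, Lemma~\ref{l:explorability} applied to the call {\tt ExpTrav}$(C,\cdot)$ guarantees that every point is seen.

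For the cost, the three boundary-traversal contributions --- line~1 (recognition) and line~5 (exploration) of {\tt ExpTrav} on cells and on interior obstacles, plus line~4 of $\explo$ --- traverse each cell's outer boundary and each inner-obstacle boundary only $O(1)$ times. Their total length is at most $O\!\left(\sum_C (R_C+P_C) + \sum_{O\subseteq C} |\partial O|\right)$; using $\sum_C P_C + \sum_{O\subseteq C} |\partial O| = P$ (every point of $\partial\cT$ is accounted for exactly once) and Lemma~\ref{lem:c} to get $\sum_C R_C = O(P+A/F)$, this is $O(P+A/F)$.

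The approaching segments (lines~6 and~8 of {\tt ExpTrav}) use the shared quadtree $Q_T$ per tile. Each call to {\tt ExpTrav} on a sub-polygon of $T$ --- a cell of $T$ or an obstacle entirely inside such a cell --- partitions exactly one terminal square in line~2, so by the per-tile version of Lemma~\ref{l:bijection} the final $Q_T$ has $x_T = m_T + k_T$ non-terminal nodes, where $m_T$ is the number of cells in $T$. Since each approach segment has length at most twice the diameter of the partitioned square and is traversed twice, the total approach length inside $T$ is at most $4\sigma(Q_T) \le 8\sqrt{2}\, F \sqrt{x_T}$ by Lemma~\ref{l:quadtree-complexity}. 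Summing over tiles and using $\sqrt{x_T}\le \sqrt{m_T}+\sqrt{k_T}$, the total approach cost is $O\!\left(F\sum_T\sqrt{m_T} + F\sum_T\sqrt{k_T}\right)$.

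It then remains to absorb $F\sum_T\sqrt{m_T}$ into $O(P+A/F)$. Since $m_T \ge 1$, $\sqrt{m_T}\le m_T$, so $F\sum_T\sqrt{m_T}\le Fm$ where $m=\sum_T m_T$ is the total cell count. I bound $m = O(A/F^2 + P/F)$ as follows: the number of tiles meeting $\cT$ is $O(A/F^2 + P/F)$ (interior tiles each occupy $F^2$ of area and boundary tiles number $O(P/F)$), while inside a tile the cells beyond the first arise from linking borderlines cutting the tile, and summed across tiles the number of linking borderlines is $O(P/F)$ since each contributes two endpoints to the $O(P/F)$ crossings of $\partial\cT$ with the tile grid. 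Hence $Fm = O(P+A/F)$ and the total trajectory length is $O(P+A/F+F\sum_T\sqrt{k_T})$.

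The main obstacle I anticipate is this final absorption step: Lemma~\ref{l:quadtree-complexity} charges cell partitions at the same rate as obstacle partitions, even though cell partitions produce no physical approaching segment, so the cell-counting bound $m=O(A/F^2+P/F)$ is essential to keep the $F\sum_T\sqrt{m_T}$ contribution inside the $O(P+A/F)$ budget.
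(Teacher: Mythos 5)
Your decomposition of the trajectory and the treatment of the boundary-traversal part (recognition, exploration, cell tours, absorbed via Lemma~\ref{lem:c} into $O(P+A/F)$) match the paper's argument. The gap is in your handling of the approaching segments, specifically the final absorption step. Your bound $m=O(A/F^2+P/F)$ on the total number of cells is false, because the supporting claim that $\partial\cT$ crosses the tile grid only $O(P/F)$ times is false: a boundary curve can cross a single grid line arbitrarily many times with arbitrarily short excursions. Concretely, take a thin obstacle-free polygon whose body lies in a tile $T_1$ just below the edge shared with $T_2$ and which pokes into $T_2$ through $N$ excursions of depth and width $\delta$. This creates $m_{T_2}\geq N$ cells while $P=O(N\delta)$ and $A=O(N\delta^2)$; choosing $\delta\ll F/\sqrt{N}$ gives $F\sqrt{m_{T_2}}\gg P+A/F$. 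So not only is your cell-count bound wrong, but the quantity $F\sum_T\sqrt{m_T}$ that your argument needs to absorb is genuinely not $O(P+A/F)$ (nor $O(P+A/F+F\sum_T\sqrt{k_T})$, since here $k=0$), and the route through $\sigma(Q_T)$ over \emph{all} non-terminal nodes cannot be repaired by better counting.

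The correct resolution --- and the paper's --- is to never charge cell-induced partitions at all: approaching segments are generated only when obstacles are approached, each has length at most twice the diameter of the terminal square that the approached obstacle's recursive call partitions, and by the bijection of Lemma~\ref{l:bijection} these squares are $k_T$ \emph{distinct} squares of $Q_T$. The extremal computation in the proof of Lemma~\ref{l:quadtree-complexity} (maximize by taking the shallowest available squares, at most two consecutive depths) applies verbatim to any set of $x$ distinct squares of a quadtree, not just to the set of all non-terminal nodes, giving sum of diameters at most $2\sqrt{2}F\sqrt{k_T}$ per tile regardless of how many cell-associated non-terminal nodes exist. This is what the paper invokes (citing the proof of Theorem~\ref{t:exploration-correctness}) to get the per-tile bound $O(F\sqrt{k_{T_i}})$; you correctly noticed the subtlety that cell calls also refine the shared quadtree, but the right conclusion is that those refinements cost no movement and should simply be excluded from the charge, not absorbed into $P+A/F$.
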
 

\begin{proof}
First, we show that Algorithm $\alg$ explores the terrain $\cT$. Consider the graph $G$ whose vertex set is $\cC$ and edges are the pairs $\{C,C'\}$ such that $C$ and $C'$ have a common point at their boundaries. The graph $G$ is connected, since $\cT$ is connected. Note that for any cell $C$ and
point $r$ on the boundary of $C$, {\tt ExpTrav}$(C,r)$ and thus $\expl(C,r)$
starts and ends on $r$. Therefore, Algorithm $\alg$ performs a depth first traversal of graph $G$, since during the execution of $\expl(C,\dots)$, procedure $\expl(U, \cdots)$ is called for each unvisited cell $U$ adjacent to $C$. Hence, $\expl(C,\dots)$ is called for each cell $C\in\cC$, since $G$ is connected. During the execution of $\expl(C,r)$, $C$ is completely explored by {\tt ExpTrav}($C$,$r$) by the same argument as in the proof of Lemma~\ref{l:explorability}, since the convex hull diameter of $C$ is less than one.

It remains to show that the length of the $\alg$ trajectory is $O(P+A/F+F\sum_{i=1}^{n}\sqrt{k_{T_i}})$. For each $j=1,\ldots,m$, the part of the $\alg$ trajectory inside the cell $C_{j}$ is produced by the execution of $\expl(C_{j},\dots)$. In step 2 of $\expl(C_{j},\dots)$, the robot executes {\tt ExpTrav} with $D=\sqrt{2}F$ and $P=P_{C_{j}}+R_{C_{j}}$. The sum of lengths of recognition and exploration sections of the trajectory in $C_{j}$ is at most $2(P_{C_{j}}+R_{C_{j}})$. The sum of lengths of approaching sections of the trajectory in $T_i$ is at most $6\sqrt{2}F\sqrt{k_{T_i}}$ and each approaching section is traversed twice (cf. proof of Theorem~\ref{t:exploration-correctness}). In step 3 of $\expl(C_{j},\dots)$, the robot only makes the tour of the cell $C_{j}$, hence the distance traveled by the robot is at most $P_{C_{j}}+R_{C_{j}}$. It follows that:
\begin{eqnarray*}
|\alg| & \leq  & 3\sum_{j=1}^{m} (P_{C_{j}}+R_{C_{j}})+12\sqrt{2}F\sum_{i=1}^{n}\sqrt{k_{T_i}}\\
 & \leq & 3\sum_{i=1}^{m}((1+c)P_{C_{j}}+cA_{C_{j}}/F)+12\sqrt{2}F\sum_{i=1}^{n}\sqrt{k_{T_i}}
\quad\mbox{by Lemma~\ref{lem:c}}\\
 & \leq & 3(c+1)P+3cA/F+12\sqrt{2}F\sum_{i=1}^{n}\sqrt{k_{T_i}}.
\end{eqnarray*}
\end{proof}

In view of Lemma~\ref{lem:algo}, exploration of a particular class of terrains can be done at a cost which will be later proved optimal. 
\begin{theorem}\label{th:fat}
Let $c>1$ be any constant. Exploration of a $c$-fat terrain of area $A$, perimeter $P$ and with $k$ obstacles can be performed using a trajectory of length $O(P+A+\sqrt{Ak})$ (without any a priori 
knowledge).
\end{theorem}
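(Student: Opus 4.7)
The plan is to invoke Lemma~\ref{lem:algo} with the constant choice $F=\sqrt{2}/2$, which requires no prior knowledge of $c$, $A$, $P$, or $k$. The robot simply fixes this value and runs Algorithm $\alg$ from any starting point. The bound from Lemma~\ref{lem:algo} becomes
\[
O\!\left(P + A/F + F\sum_{i=1}^{n}\sqrt{k_{T_i}}\right)
= O\!\left(P + A + \sum_{i=1}^{n}\sqrt{k_{T_i}}\right),
\]
so the task reduces to showing that the last sum is $O(\sqrt{Ak})$, and this is the place where fatness must be used.

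First I would apply Cauchy--Schwarz to the sum of square roots:
\[
\sum_{i=1}^{n}\sqrt{k_{T_i}} \;\leq\; \sqrt{n\cdot \sum_{i=1}^{n} k_{T_i}} \;\leq\; \sqrt{nk},
\]
since each obstacle is counted in at most one $k_{T_i}$ (only obstacles entirely contained in $T_i$ contribute), so $\sum k_{T_i}\leq k$. The remaining quantity to control is $n$, the number of tiles meeting $\cT$.

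The key step, and the main obstacle of the proof, is bounding $n$ in terms of $A$ for $c$-fat terrains. Let $R$ and $r$ be the circumradius and inradius of $\cT$ as in the definition of $c$-fatness. Since the disc of radius $r$ is contained in $\cT$, one has $\pi r^2 \leq A$, so $r\leq \sqrt{A/\pi}$; combined with $R\leq cr$, this gives $R\leq c\sqrt{A/\pi}$. Since $\cT$ is contained in a disc of radius $R$, the number of tiles of side $F$ intersecting $\cT$ satisfies
\[
n \;=\; O\!\bigl((R/F+1)^2\bigr) \;=\; O(c^2 A/F^2) \;=\; O(A),
\]
where the constant in $O(\cdot)$ depends on $c$ but not on the terrain (and the ``$+1$'' is absorbed by the sufficiently-large-$A$ convention of the complexity notation). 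Substituting this into the previous estimate yields $\sum_{i=1}^{n}\sqrt{k_{T_i}} \leq \sqrt{nk} = O(\sqrt{Ak})$.

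Finally I would combine the pieces: the trajectory produced by $\alg$ with $F=\sqrt{2}/2$ has length $O(P+A+\sqrt{Ak})$, and no parameter of the terrain was used in choosing $F$, which gives the claimed bound with no a priori knowledge. The crux of the argument is really the fatness-based inequality $n=O(A)$: without it, a long thin or highly branched terrain could meet $n \gg A$ tiles and force the sum $F\sum\sqrt{k_{T_i}}$ well above $\sqrt{Ak}$, which is precisely why the general case needs additional information (as treated elsewhere in the paper).
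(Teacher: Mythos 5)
Your proposal is correct and follows essentially the same route as the paper: run $\alg$ with $F=\sqrt{2}/2$, bound $\sum_{i=1}^{n}\sqrt{k_{T_i}}\leq\sqrt{nk}$, and use $c$-fatness to get $n=O(A)$. The only (cosmetic) differences are that you bound the area directly by $\pi r^2\leq A$ where the paper counts $\Theta(r^2)$ tiles inside the inscribed disc, and you make the Cauchy--Schwarz step explicit.
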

\begin{proof}
The robot executes Algorithm $\alg$ with $F=\sqrt{2}/2$. By Lemma~\ref{lem:algo},
the total cost is $O(P+A+\sum_{i=1}^{n}\sqrt{k_{T_i}})$. Recall that $n$ is the number of tiles that have non-empty intersection with the terrain. We have $\sum_{i=1}^{n}\sqrt{k_{T_i}}\leq \sum_{i=1}^{n}\sqrt{\frac{k}{n}}=\sqrt{nk}$. Hence, it remains to show that $n=O(A)$ to prove that
the cost is $O(P+A+\sqrt{Ak})$. By definition of a $c$-fat terrain, there is a disk $D_1$ of radius $r$ included in the terrain and a disk $D_2$ of radius $R$ that contains the terrain, such that $\frac{R}{r}\leq c$. There are $\Theta(r^2)$ tiles entirely included in $D_1$ and hence in the terrain. So, we have $A=\Omega(r^2)$. $\Theta(R^2)$ tiles are sufficient to cover $D_2$ and hence the terrain. So $n=O(R^2)$.
Hence, we obtain $n=O(A)$ in view of $R\leq cr$.
\end{proof}

Consider any terrain $\cT$ of area $A$, perimeter $P$ and with $k$ obstacles. 
We now turn attention to the exploration problem if some knowledge about the terrain is available a priori.
Notice that if $A$ and $k$ are known before the exploration, Lemma~\ref{lem:algo} implies that Algorithm
$\alg$ executed for $F=\min\{\sqrt{A/k},\sqrt{2}/2\}$ explores \emph{any} terrain at cost $O(A+P+\sqrt{Ak})$. (Indeed, if $F=\sqrt{A/k}$ then $A/F=\sqrt{Ak}$ and $kF=\sqrt{Ak}$, while $F=\sqrt{2}/2$ implies $A/F=\Theta(A)$ and $kF=O(A)$.) This cost will be later proved optimal. It turns out that a much more subtle use of Algorithm $\alg$ can guarantee the same complexity assuming only knowledge of $A$ \emph{or} $k$. We present two different algorithms depending on which value, $A$ or $k$, is known to the robot. Both algorithms rely on the same idea. The robot executes Algorithm $\alg$ with some initial value of $F$ until either the terrain is completely explored, or a certain stopping condition, depending on the algorithm, is satisfied. This execution constitutes the first stage of the two algorithms. If exploration was interrupted because of the stopping condition, then the robot proceeds to a new stage by executing Algorithm $\alg$ with a new value of $F$. Values of $F$ decrease in the first algorithm and increase in the second one. The exploration terminates at the stage when the terrain becomes completely explored, while the stopping condition is never satisfied.

In each stage the robot is oblivious of the previous stages, except for the computation of the new value of $F$ that depends on the previous stage. This means that in each stage exploration is done ``from scratch'', without recording what was explored in previous stages. In order to test the stopping condition in a given stage, the robot maintains the following three values: the sum $A^*$ of areas of explored cells, updated after the execution of {\tt ExpTrav} in each cell; the length $P^*$ of the boundary traversed by the robot, continuously updated when the robot moves along a boundary for the first time (i.e., in the recognition mode); and the number $k^*$ of obstacles approached by the robot, updated when an obstacle is approached. The values of $A^*$, $P^*$ and $k^*$ at the end of the $i$-th stage are denoted by $A_i$, $P_i$ and $k_i$, respectively.  Let $F_i$ be the value of $F$ used by Algorithm $\alg$ in the $i$-th stage. Now, we are ready to describe the stopping conditions and the values $F_i$ in both algorithms. 
\begin{center}
\fbox{%
   \begin{minipage}{0.98\textwidth}
\textbf{Algorithm $\algA$, for $A$ known before exploration}

The value of $F$ used in Algorithm $\alg$ for the first stage is $F_1=\sqrt{2}/2$. The value 
of $F$ for subsequent stages is given by $F_{i+1}=\frac{A}{k_iF_i}$. The stopping condition
is $\{k^*F_i\geq 2A/F_i$ and $k^*F_i\geq P^*+1\}$. 
 \end{minipage}%
}

\fbox{%
 \begin{minipage}{0.98\textwidth}
\textbf{Algorithm $\algk$, for $k$ known before exploration}

The value of $F$ used in Algorithm $\alg$ for the first stage is $F_1=\frac{1}{k+\sqrt{2}}$. The value 
of $F$ for subsequent stages is given by $F_{i+1}=\min\left\{\frac{A_i}{kF_i},\frac{\sqrt{2}}{2}\right\}$. The stopping condition is $\{A^*/F_i\geq 2kF_i\mbox{ and }A^*/F_i\geq P^*+1\mbox{ and }F_i<\sqrt{2}/2\}$. 
 \end{minipage}%
}
\end{center}
Consider a moment $t$ during the execution of Algorithm $\alg$. Let $\cC_t$ be the set of cells recorded as visited by Algorithm $\alg$ at moment $t$, and let $\cO_t$ be the set of obstacles approached by the robot until time $t$. For each $C\in \cC_t$, let $B_C$ be the length of the intersection of the exterior boundary of cell $C$ with the boundary of the terrain. For each $O\in \cO_t$, let $|O|$ be the perimeter of obstacle $O$ and let $k_t=|\cO_t|$. The following proposition is proved similarly as Lemma~\ref{lem:algo}.

\begin{proposition}\label{prop:1arg}
There is a positive constant $d$ such that the length of the trajectory of the robot until any time $t$, during the execution of Algorithm $\alg$, is at most $d(\sum_{C\in \cC_t} (B_C+A_C/F)+(k_t+1)\cdot F+\sum_{O\in \cO_t}|O|)$.
\end{proposition}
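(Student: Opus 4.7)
The plan is to follow the proof of Lemma~\ref{lem:algo} closely, with adaptations for the fact that at time $t$ only a subset $\cC_t$ of cells have been visited and a subset $\cO_t$ of obstacles have been approached. First I would decompose the trajectory traced up to time $t$ into three categories: (i) traversals of the outer boundary of visited cells (arising from step~1 and step~5 of each call $\mbox{{\tt ExpTrav}}(C,\cdot)$, together with step~4 of $\expl(C,\cdot)$); (ii) traversals of obstacle boundaries (from the recursive calls $\mbox{{\tt ExpTrav}}(O,\cdot)$ for each $O\in \cO_t$); and (iii) approaching segments.

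For category (i), each $C\in\cC_t$ contributes at most three full traversals of its outer boundary $B_C+R_C$, giving the bound $3(B_C+R_C)$; this upper bound remains valid even if the corresponding call is only partially completed at time $t$. For category (ii), each $O\in\cO_t$ contributes a recognition traversal of length $|O|$ plus cumulative exploration sections of total length at most $|O|$, for a total of $2|O|$ per obstacle. For category (iii), I would use a direct argument in place of Lemma~\ref{l:quadtree-complexity}: every approaching segment has length at most $2\sqrt{2}F$ (twice the diameter of a terminal quadtree square, which is at most $\sqrt{2}F$), is traversed at most twice (steps~6 and~8 of some $\mbox{{\tt ExpTrav}}$ call), and is in bijection with an approached obstacle. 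Hence the total length of approaching segments is at most $4\sqrt{2}Fk_t\le 4\sqrt{2}(k_t+1)F$.

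To conclude, I would invoke Lemma~\ref{lem:c} to bound $R_C\le c(A_C/F+P_C)$ for each $C\in\cC_t$, where $P_C$ in the notation of Lemma~\ref{lem:c} equals $B_C$ plus the sum of perimeters of the obstacles entirely inside $C$. Summing over $C\in\cC_t$ and regrouping yields contributions of the form $\sum_{C}B_C$, $\sum_C A_C/F$, and obstacle perimeter terms, which together with the bounds from (ii) and (iii) give the claimed inequality once all multiplicative constants are absorbed into $d$.

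The main obstacle in this plan is the last step: one must carefully reconcile the fact that $P_C$ as used in Lemma~\ref{lem:c} nominally includes the perimeters of obstacles entirely inside $C$ whether or not they have already been approached by time~$t$, while the proposition's bound only involves $\sum_{O\in\cO_t}|O|$. The argument requires noting that at most one cell on the recursion stack of $\expl$ calls can be in the middle of its $\mbox{{\tt ExpTrav}}$ execution (all others have completed step~2 and so have all their interior obstacles in $\cO_t$), so the possible extra contribution from that single cell's unapproached obstacles is controlled and can be absorbed into the slack $(k_t+1)F$ term using the trivial bound $R_C\le 4F$.
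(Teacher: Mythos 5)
Your proposal is correct and follows essentially the same route as the paper's own proof: the paper likewise reuses the per-cell accounting of Lemma~\ref{lem:algo} (recognition/exploration/tour bounded by the cell boundary via Lemma~\ref{lem:c}, obstacle boundaries at most twice, one $O(F)$ approaching segment per approached obstacle, here counted linearly in $k_t$), and it isolates the single current cell $U$ whose {\tt ExpTrav} may be unfinished, bounding its exterior boundary by $G_U\leq 4F+B_U$ — exactly your $R_U\leq 4F$ patch — so that unapproached obstacles never enter the bound. The subtlety you flag at the end is precisely the point the paper's proof addresses, and your resolution matches it.
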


\begin{proof}
Let $U$ be the current cell at moment $t$, which means that the instance of procedure $\explo$
executed at moment $t$ was called with $U$ as its first parameter. Let $\cO_t'$
be the set of obstacles in $\cO_t$ that are not included in $U$ and let $\cC_t'=\cC_t\setminus \{U\}$.
All the obstacles included in $C\in\cC_t'$ were approached and visited
by the robot. Hence, we have $\sum_{C\in \cC_t'} P_C=\sum_{C\in \cC_t'}B_C +\sum_{O\in \cO_t'}|O|$.
By the same arguments as in the proof of Lemma~\ref{lem:algo}, it follows that the total length 
of the trajectory of the robot in the cells in $\cC_t'$ is at most $d'(\sum_{C\in\cC_t'}(P_C+A_C/F)+F|\cO_t'|)=d'(\sum_{C\in \cC_t'}(B_C+A_C/F)+\sum_{O\in \cO_t'}|O|+F|\cO_t'|)$ for some positive constant $d'$. The length of the trajectory of the robot in $U$ is at most $3G_U+2\sum_{O\in \cO_t\setminus \cO_t'} |O|+|\cO_t\setminus \cO_t'|\cdot F$, where $G_U$ is the length of the exterior boundary of cell $U$. Hence, the length of the trajectory of the robot until time $t$ is at most $d(\sum_{C\in \cC_t} (B_C+A_C/F)+(k_t+1)\cdot F+\sum_{O\in \cO_t}|O|)$ for some positive constant $d$, since $G_U\leq 4F+B_U$. 
\end{proof}

The following lemma establishes the complexity of exploration 
if either the area of the terrain or the number of obstacles is known a priori.

\begin{lemma}\label{lem:algAk}
Algorithm $\algA$ (resp. $\algk$) explores a terrain $\cT$ of area $A$, perimeter $P$ and with $k$ obstacles, using a trajectory of length $O(P+A+\sqrt{Ak})$, if $A$ (resp. $k$) is known before exploration.
\end{lemma}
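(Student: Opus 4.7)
The plan is to decompose the trajectory of $\algA$ (resp. $\algk$) into stages and bound each stage separately, using Proposition~\ref{prop:1arg} at the moment the stopping condition triggers for each interrupted stage and Lemma~\ref{lem:algo} for the unique final stage in which exploration completes. In both algorithms the recurrence and the stopping condition force $F$ to change geometrically, so the sum of interrupted-stage costs will telescope to the order of the final-stage cost. Observing that $\sum_{C\in\cC_t}A_C\leq A^*$ and that $\sum_{C\in\cC_t}B_C+\sum_{O\in\cO_t}|O|\leq P^*$ (since every recognized piece of $\cT$'s boundary lies either on an obstacle boundary, contributing to some $|O|$, or on the exterior boundary of a visited cell, contributing to some $B_C$), Proposition~\ref{prop:1arg} implies that at any moment $t$ inside stage $i$ the trajectory length is $O(P^*+A^*/F_i+k^*F_i)$.

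For Algorithm $\algA$: when stopping triggers in stage $i$, the inequalities $k_iF_i\geq 2A/F_i\geq A_i/F_i$ and $k_iF_i\geq P_i+1$ reduce the above bound to $O(k_iF_i)=O(A/F_{i+1})$ via the recurrence $F_{i+1}=A/(k_iF_i)$. The same inequality gives $F_{i+1}\leq F_i/2$, while stopping cannot hold once $F_i<\sqrt{2A/k}$; hence only finitely many stages occur and the total cost of the interrupted stages telescopes to $O(A/F_{i^*})$. For the final stage, Lemma~\ref{lem:algo} combined with the trivial bound $\sum_j\sqrt{k_{T_j}}\leq k$ gives cost $O(P+A/F_{i^*}+F_{i^*}k)$. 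Since $k^*$ jumps by $1$ at a time, the stopping condition at stage $i^*-1$ is nearly tight, namely $k_{i^*-1}F_{i^*-1}^2\in[2A,2A+1/2]$, which pins $F_{i^*-1}=\Theta(\sqrt{A/k_{i^*-1}})$ and hence $A/F_{i^*}=k_{i^*-1}F_{i^*-1}=O(\sqrt{Ak_{i^*-1}})=O(\sqrt{Ak})$; this subcase also forces $k_{i^*-1}\geq 4A$, absorbing the $A$ term. Failure of stopping at stage $i^*$ (where $k^*=k$, $P^*=P$) then forces either $F_{i^*}<\sqrt{2A/k}$, giving $F_{i^*}k<\sqrt{2Ak}$, or $kF_{i^*}<P+1$, giving $F_{i^*}k=O(P)$. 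The corner case $i^*=1$ is handled directly: failure of stopping in stage $1$ forces $k<4A$ or $k=O(P)$, yielding cost $O(P+A)$.

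For Algorithm $\algk$ the analysis is dual: stopping at stage $i$ yields $A_i\geq 2kF_i^2$, so the recurrence $F_{i+1}=\min(A_i/(kF_i),\sqrt{2}/2)$ at least doubles $F$ (unless capped), and the interrupted-stage cost is $O(A_i/F_i)=O(kF_{i+1})$, summing geometrically to $O(kF_{i^*})$. I then split the final stage by which stopping condition fails at completion: (a) if $A/F_{i^*}<2kF_{i^*}$, the one-cell overshoot bound $A_{i^*-1}\leq(2k+1)F_{i^*-1}^2$ together with $F_{i^*-1}\leq\sqrt{A/(2k)}$ forces $F_{i^*}\leq 3F_{i^*-1}\leq 3\sqrt{A/(2k)}$, so both $F_{i^*}k$ and $A/F_{i^*}$ are $O(\sqrt{Ak})$; (b) if $A/F_{i^*}<P+1$ together with not-(a), then $F_{i^*}\leq\sqrt{A/(2k)}$, so cost is $O(P+\sqrt{Ak})$; (c) if $F_{i^*}=\sqrt{2}/2$ (cap), then the cap condition $A_{i^*-1}\geq kF_{i^*-1}\sqrt{2}/2$ combined with the overshoot bound forces $F_{i^*-1}\geq\sqrt{2}/6$, hence $k\leq\sqrt{2}A/F_{i^*-1}=O(A)$, and the cost is $O(P+A)$. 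The initial choice $F_1=1/(k+\sqrt{2})$ handles the $i^*=1$ case: failure of stopping in stage $1$ forces $Ak=O(P+1)$, so the cost is $O(P+1)$.

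The main technical obstacle is precisely the final-stage analysis: one must combine the \emph{triggering} of stopping at stage $i^*-1$ (which pins $F_{i^*-1}$ near the ideal value $\sqrt{A/k_{i^*-1}}$ or $\sqrt{A/(2k)}$) with the \emph{failure} of stopping at stage $i^*$ (which yields the necessary upper bound on $F_{i^*}$ in terms of the true parameters $A$, $k$, $P$). Tracking the one-step overshoot of the discrete quantities $k^*$ and $A^*$ is what makes these estimates tight; everything else reduces to geometric summation.
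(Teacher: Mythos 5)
You follow essentially the same skeleton as the paper (stage decomposition, Proposition~\ref{prop:1arg} for the interrupted stages, geometric telescoping, and a combined analysis of the trigger at stage $i^*-1$ with the failure of the stopping condition at stage $i^*$), but there is a genuine gap in your trigger analysis. Each stopping condition is a \emph{conjunction}, and your ``near-tightness'' claims --- $k_{i^*-1}F_{i^*-1}^2\in[2A,2A+1/2]$ for $\algA$, and the one-cell overshoot bound $A_{i^*-1}\leq(2k+1)F_{i^*-1}^2$ for $\algk$ --- are justified only when the first inequality ($k^*F_i\geq 2A/F_i$, resp.\ $A^*/F_i\geq 2kF_i$) is the \emph{last} one to become satisfied. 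If instead the perimeter inequality ($k^*F_i\geq P^*+1$, resp.\ $A^*/F_i\geq P^*+1$) is the last to become true, the first inequality may already hold with arbitrary slack (e.g., obstacles with long boundaries keep $P^*$ growing, delaying the trigger while $k^*$, resp.\ $A^*$, keeps increasing), so nothing pins $F_{i^*-1}$ near $\sqrt{A/k_{i^*-1}}$ (resp.\ nothing bounds $A_{i^*-1}$ by $(2k+1)F_{i^*-1}^2$). In that situation your bound $A/F_{i^*}=O(\sqrt{Ak})$ for $\algA$ (on which both the telescoped interrupted-stage cost and the $A/F_{i^*}$ term of the final stage rest), and your deductions $F_{i^*}\leq 3F_{i^*-1}$ in case (a) and $F_{i^*-1}\geq\sqrt{2}/6$ in case (c) for $\algk$, are unsupported, so an entire case is uncovered. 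The paper treats it as a separate Case~2: if the perimeter inequality is the one that just became satisfied, then $(k_{i^*-1}-1)F_{i^*-1}<P_{i^*-1}+1$, hence $A/F_{i^*}=k_{i^*-1}F_{i^*-1}\leq P+2$ (resp.\ $kF_{i^*}\leq A_{i^*-1}/F_{i^*-1}\leq P+2$), and everything is absorbed into $O(P)$. You need to add this branch to both parts.

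Two smaller points. For $\algk$ your telescoping step writes the cost of an interrupted stage as $O(A_i/F_i)=O(kF_{i+1})$; this identity fails for the penultimate stage when the cap $F_{i^*}=\sqrt{2}/2$ is taken, since then $A_{i^*-1}/F_{i^*-1}$ can strictly exceed $kF_{i^*}$, so that stage needs its own bound on $A_{i^*-1}/F_{i^*-1}$ --- again supplied by the trigger analysis in \emph{both} of its cases (yielding $O(\sqrt{Ak})$ or $O(P)$). Also, in the $n=1$ corner for $\algk$ you should separate $k=0$: then $F_1=\sqrt{2}/2$, the third clause never holds, and the cost is $O(P+A)$ rather than $O(P+1)$. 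These are easily repaired; the substantive omission is the missing perimeter-binding case at the trigger.
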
 

\begin{proof}

\textbf{Part 1: complexity of Algorithm $\algA$}

First, we show that the algorithm eventually terminates and completely explores $\cT$. Remark that for each $i> 1$, $F_{i}=\frac{A}{k_{i-1}F_{i-1}}\leq \frac{F_{i-1}}{2}$ since $k_{i-1}F_{i-1}\geq 2\frac{A}{F_{i-1}}$. Since $F_1=\sqrt{2}/2$, for each $i\geq 1$, we have $F_{i}\leq \frac{\sqrt{2}}{2^{i}}$. The algorithm eventually terminates, since there exists $m\in\bbN$ such that $kF_m<2A/F_1<2A/F_m$ and the stopping condition is never satisfied in this case. In the last stage, Algorithm $\alg$ performs complete exploration of the terrain by Lemma~\ref{lem:algo}, since the value of $F$ used for exploration is at most $\sqrt{2}/2$.

Let $D_i$ be the distance traveled by the robot during the $i$-th stage and let $n$ be the number of stages. By Proposition~\ref{prop:1arg}, if stage $i$ ends at moment $t_i$ then $D_i\leq d(\sum_{C\in \cC_{t_i}} (B_C+A_C/F_i)+(|\cO_{t_i}|+1)\cdot F_i+\sum_{O\in \cO_{t_i}}|O|)$ for each $i\geq 1$. Since the algorithm can only interrupt stage $i<n$ when approaching an obstacle, we have $P_i=\sum_{C\in \cC_{t_i}} B_C+\sum_{O\in \cO_{t_i}}|O|$. We obtain that $D_i\leq d(P_i+A/F_i+(k_i+1)F_i)$ for each $i\geq 1$. 

If $n=1$, then the stopping condition is never satisfied and $kF_1\leq \max\{2A/F_1,P+1\}$. The total cost is at most $d(P+A/F_1+(k+1)F_1)=O(P+A)$, since $F_1=\sqrt{2}/2$.
On the other hand, if $n>1$, then for each $i$ such that $1\leq i<n$, we have 
$D_i\leq d(P_i+A/F_i+(k_i+1)F_i)\leq 3d k_iF_i$. Indeed, we have 
$k_iF_i\geq A/F_i$ and $k_iF_i\geq P_i+F_i$, since the stopping 
condition is satisfied at the end of the $i$-th stage. Moreover,
we have $\frac{1}{2}k_{i}F_{i} \geq A/F_{i}=k_{i-1}F_{i-1}$,
for all $1< i<n$. It follows that $\sum_{i=1}^{n-1} D_i\leq 6d k_{n-1}F_{n-1}$.
In order to show that the total cost is $O(P+A+\sqrt{Ak})$, it is sufficient to show that $P+A/F_n+kF_n=O(P+A+\sqrt{Ak})$, since $k_{n-1}F_{n-1}=A/F_n$ and $D_n\leq d(P_n+A/F_n+(k+1)F_n)$.

Take the moment $t_{n-1}$ when the $(n-1)$-th stage was interrupted, i.e., when both inequalities of the stopping condition started to be satisfied. Consider the inequality which was not satisfied just before moment $t_{n-1}$. If this is the first of the two inequalities, then at time $t_{n-1}$ the Algorithm $\algA$
must have increased the value of $k^*$, hence just before moment $t_{n-1}$ we had $(k_{n-1}-1)F_{n-1}<2AF_{n-1}$. Similarly, if the second inequality was not satisfied just before 
moment $t_{n-1}$, then we had $(k_{n-1}-1)F_{n-1}<P_{n-1}+1$.

\begin{description}
\item\textbf{Case 1:} $(k_{n-1}-1)F_{n-1}< 2A/F_{n-1}$

We have:
\begin{eqnarray*}
k_{n-1}F_{n-1}&\leq& 2\frac{A}{F_{n-1}}+F_{n-1}\\
\frac{A}{F_{n}}&\leq& 2\frac{A}{F_{n-1}}+1\quad\mbox{since }F_{n-1}=\frac{A}{k_{n-1}F_{n}} \mbox{ and } F_{n-1}\leq 1\\
\frac{A}{F_{n}}&\leq&  \sqrt{2}\sqrt{Ak_{n-1}}+1\quad\mbox{since }k_{n-1}F_{n-1}\geq 2\frac{A}{F_{n-1}} \mbox{ and thus } \left(\frac{A}{F_{n-1}}\right)^2\leq \frac{1}{2}k_{n-1}F_{n-1}\frac{A}{F_{n-1}}\\
\frac{A}{F_n} &\leq& 2\sqrt{Ak}+1\quad\mbox{since }k_{n-1}\leq k
\end{eqnarray*}

Since the stopping condition in the last stage is not satisfied, we have 
$kF_n\leq\max\{2A/F_n,P+1\}$. We obtain $P+A/F_n+kF_n=O(P+A/F_n)$.
Hence, we have $P+A/F_n+kF_n=O(P+\sqrt{Ak})$.

\item\textbf{Case 2:} $(k_{n-1}-1)F_{n-1}< P_{n-1}+1$

We have:
\begin{eqnarray*}
k_{n-1}F_{n-1} & \leq & P_{n-1}+F_{n-1}+1\\
\frac{A}{F_{n}} & \leq & P_{n-1}+F_{n-1}+1\quad\mbox{since }k_{n-1}F_{n-1}=\frac{A}{F_{n}}\\
\frac{A}{F_{n}} & \leq & P+2\quad\mbox{since }F_{n-1}\leq 1\mbox{ and }P_{n-1}\leq P
\end{eqnarray*}

Since the stopping condition in the last stage is not satisfied, we have 
$kF_n\leq\max\{2A/F_n,P+1\}$, as before.
We obtain $P+A/F_n+kF_n=O(P+A/F_n)=O(P)$.
\end{description}

\noindent\textbf{Part 2: complexity of Algorithm $\algk$}

The proof is similar to that concerning Algorithm $\algA$. The main difference follows from the additional clause $F_i<\sqrt{2}/2$ in the stopping condition. This clause was not necessary in Algorithm $\algA$
because, as opposed to the present case, sides of tiles were decreasing. 
First, we show that the algorithm eventually terminates and completely explores $\cT$. Remark that for each $i>1$, we have $F_{i}=\min\{A_{i-1}/kF_{i-1},\sqrt{2}/2\}\geq\min\{2F_{i-1},\sqrt{2}/2\}$, since $A_{i-1}/F_{i-1}\geq 2kF_{i-1}$. Hence, the algorithm eventually terminates. Indeed, even if the first two inequalities remain true, the third must
become false at some point. Notice that $F_1\leq \sqrt{2}/2$ since $k\geq 0$, and for all $i> 1$ we have $F_{i}\leq \sqrt{2}/2$. In the last stage, Algorithm $\alg$ performs complete exploration of the terrain by Lemma~\ref{lem:algo}, since the value of $F$ used for the exploration is at most $\sqrt{2}/2$.

Let $D_i$ be the distance traveled by the robot during the $i$-th stage and let $n$ be the number of stages. By Proposition~\ref{prop:1arg}, if stage $i$ ends at moment $t_i$, then $D_i\leq d(\sum_{C\in \cC_{t_i}} (B_C+A_C/F_i)+(|\cO_{t_i}|+1)\cdot F_i+\sum_{O\in \cO_{t_i}}|O|)$ for each $i\geq 1$. Since the algorithm can only stop when completing the exploration of a cell, we have $P_i=\sum_{C\in \cC_{t_i}} B_C+\sum_{O\in \cO_{t_i}}|O|$ and $A_i=\sum_{C\in \cC_{t_i}} A_C$. We obtain that $D_i\leq d(P_i+A_i/F_i+(k+1)F_i)$ for each $i\geq 1$. 

If $n=1$, then the stopping condition is never satisfied and either $A/F_1\leq \max\{2kF_1,P+1\}$ or $F_1=\sqrt{2}/2$. In the first case, the total cost is at most $d(P+A/F_1+(k+1)F_1)=O(P)$ since $kF_1\leq 1$. In the second case, we have $k=0$ and  the total cost is at most $d(P+A/F_1+(k+1)F_1)=O(P+A)$.
On the other hand, if $n>1$ then for each $i$ such that $1\leq i<n$, we have 
$D_i\leq d(P_i+A_i/F_i+(k+1)F_i)\leq 3d A_i/F_i$. Indeed, we have 
$\frac{A_i}{F_i}\geq kF_i$ and $\frac{A_i}{F_i}\geq P_i+F_i$, since the stopping 
condition is satisfied at the end of the $i$-th stage. Moreover,
we have $\frac{A_{i}}{2F_{i}} \geq kF_{i}=\frac{A_{i-1}}{F_{i-1}}$
for all $1< i<n$. It follows that $\sum_{i=1}^{n-1} D_i\leq 6d \frac{A_{n-1}}{F_{n-1}}$.

Notice that the third inequality of the stopping condition is always satisfied during 
the $(n-1)$-stage. Take the moment $t_{n-1}$ when the $(n-1)$-th stage was interrupted, i.e., when the two first inequalities of the stopping condition started to be satisfied. Consider the inequality which was not satisfied just before moment $t_{n-1}$. If this is the first of the two inequalities, then at time $t_{n-1}$ the Algorithm $\algk$ must have increased the value of $A^*$ by at most $F_{n-1}^2$ since each cell
is included in a square of size $F_{n-1}$. Hence just before moment $t_{n-1}$ we had $
\frac{A_{n-1}-F_{n-1}^2}{F_{n-1}}<2AF_{n-1}$. Similarly, if the second inequality was not satisfied just before moment $t_{n-1}$, then we had $\frac{A_{n-1}-F_{n-1}^2}{F_{n-1}}<P_{n-1}+1$.

\begin{description}
\item\textbf{Case 1:} $\frac{A_{n-1}-F_{n-1}^2}{F_{n-1}}<2AF_{n-1}$

Notice that $k\geq 1$, since otherwise $F_1=\sqrt{2}/2$ and the 
stopping condition would never be satisfied in the first stage.
We have:
\begin{eqnarray*}
\frac{A_{n-1}}{F_{n-1}} & \leq & 2kF_{n-1}+F_{n-1}\\
kF_n &\leq &  2kF_{n-1}+1\quad\mbox{since }
F_{n-1}\leq\frac{A_{n-1}}{kF_{n}}\mbox{ and }F_{n-1}\leq 1\\
kF_n &\leq & \sqrt{2}\sqrt{A_{n-1}k}+1\quad\mbox{since }\frac{A_{n-1}}{F_{n-1}}\geq 2kF_{n-1} \mbox{ and thus } \left(kF_{n-1}\right)^2 \leq \frac{1}{2}\frac{A_{n-1}}{F_{n-1}}kF_{n-1}\\
kF_n &\leq & 2\sqrt{Ak}+1\quad\mbox{since }A_{n-1}\leq A
\end{eqnarray*}

Since the stopping condition in the last stage is not satisfied, we have 
either $A/F_n\leq\max\{2kF_n,P+1\}$ and thus $A/F_n=O(P+kF_n)$,
or $F_n=\sqrt{2}/2$ and thus $A/F_n=O(A)$. We obtain that $D_n\leq d(P+A/F_n+(k+1)F_n)=O(P+\sqrt{Ak}+A)$. From the previous sequence of inequalities, we also have $\sum_{i=1}^{n-1} D_i\leq 6d(A_{n-1}/F_{n-1}) =O(\sqrt{Ak})$. Hence, the total cost is $O(P+\sqrt{Ak}+A)$.

\item\textbf{Case 2:} $\frac{A_{n-1}-F_{n-1}^2}{F_{n-1}}<P_{n-1}+1$

We have:
\begin{eqnarray*}
\frac{A_{n-1}}{F_{n-1}} & \leq & P_{n-1}+F_{n-1}+1\\
kF_{n} & \leq & P_{n-1}+F_{n-1}+1\quad\mbox{since }kF_{n}\leq\frac{A_{n-1}}{F_{n-1}}\\
kF_{n} & \leq & P+2\quad\mbox{since }F_{n-1}\leq 1\mbox{ and }  P_{n-1}\leq P
\end{eqnarray*}

As before, we have $A/F_n=O(P+kF_n)$ or $A/F_n=O(A)$.
We obtain that $D_n\leq d(P+A/F_n+(k+1)F_n)=O(P+A)$. 
We have $\sum_{i=1}^{n-1} D_i\leq 6d(A_{n-1}/F_{n-1}) =O(P)$. Hence, 
the total cost is $O(P+A)$.

\end{description}
\end{proof}

The following theorem shows that the lengths of trajectories in Lemma~\ref{lem:algAk} and in Theorem~\ref{th:fat} are asymptotically optimal.

\begin{theorem}
\label{t:lower-limited}
Any algorithm for a robot with limited visibility, exploring polygonal terrains of area $A$, perimeter $P$ and with $k$ obstacles, produces trajectories of length $\Omega(P+A+\sqrt{Ak})$ in some terrains, even if the terrain is known to the robot.
\end{theorem}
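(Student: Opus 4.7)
The plan is to exhibit three families of terrains, each witnessing one of the three summands in the lower bound, mirroring the three-term structure of the upper bound in Lemma~\ref{lem:algAk}. Since the claim is a worst-case statement (``in some terrains''), it suffices to produce, for each term, terrains in which that term dominates.

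For the $\Omega(P)$ term, I would reuse essentially the comb construction from Figure~\ref{fig:lower-unlimited}(b): a polygon with no obstacles consisting of many thin parallel corridors of total perimeter~$P$ and negligible area, attached to a spine. Each corridor is narrower than~$1$ and longer than~$1$, so with limited vision the robot cannot see the endpoint of a corridor from outside it; it must enter each corridor almost to its tip, and the sum of corridor lengths is $\Theta(P)$. This yields an $\Omega(P)$ lower bound even in the limited-vision model, with $A$ arbitrarily small and $k=0$. For the $\Omega(A)$ term, I would take an $m\times m$ square terrain (with $m\ge 2$ and $A=m^2$) with no obstacles. In a convex region every pair of points is mutually visible, so visibility reduces to the distance condition: each point of $\cT$ must lie within distance~$1$ of the trajectory~$\gamma$. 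Thus the Minkowski sum of $\gamma$ with the closed unit disk contains $\cT$. For any rectifiable curve of length~$L$ this Minkowski sum has area at most $2L+\pi$ (a standard ``tube'' bound), hence $2L+\pi \ge A$ and $L=\Omega(A)$.

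For the $\Omega(\sqrt{Ak})$ term, I would assume $A/k\ge 1$ (otherwise $\sqrt{Ak}\le A$ and the square family above already suffices) and adapt the grid construction of Figure~\ref{fig:lower-unlimited}(a) to limited vision. Inside a square of area~$A$, place $k$ tiny obstacles at the nodes of a $\sqrt{k}\times\sqrt{k}$ grid of spacing $\sqrt{A/k}\ge 1$. Each obstacle is a small shape (of diameter $\varepsilon \ll 1$) chosen so that the area inside its convex hull can be explored only by entering its convex hull (e.g.\ a small ``C''-shape whose cavity is not visible from outside). The key point is that the obstacles are small and the grid spacing is at least~$1$, so even with limited vision the obstacles do not help the robot ``see'' one another: the robot must physically visit a neighborhood of diameter $O(\varepsilon)$ around each of the $k$ grid points. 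Since any two such neighborhoods are at distance at least $\sqrt{A/k}-2\varepsilon$ apart, the trajectory has length at least $(k-1)(\sqrt{A/k}-2\varepsilon)=\Omega(\sqrt{Ak})$, while $P=\Theta(\sqrt{A})+O(k\varepsilon)$ and the area remains $\Theta(A)$.

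The main obstacle is the third family: I must design an obstacle shape small enough that it contributes negligibly to $P$ and to the area occupied by obstacles, yet whose full exploration genuinely requires the robot to approach it under the limited-vision constraint, and verify that no ``shortcut'' is possible (e.g.\ the robot cannot see one cavity from near a neighboring obstacle). Once these geometric details are fixed, the three lower bounds combine into the claimed $\Omega(P+A+\sqrt{Ak})$ bound, and the construction works even when the terrain is known in advance, since it is a pure information-theoretic/geometric length argument rather than an adversarial one.
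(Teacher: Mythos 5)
Your overall plan is exactly the paper's: three families of terrains (a square with $k$ tiny obstacles on a $\sqrt{k}\times\sqrt{k}$ grid for the $\sqrt{Ak}$ term, the comb-like polygon of Figure~\ref{fig:lower-unlimited}(b) for the $P$ term, and an empty square with a ``tube''/swept-area bound for the $A$ term, which is the paper's remark that traversing distance $d$ explores area at most $2d$). However, there is one step that fails as written: your case split for the third family. You assume $A/k\ge 1$ and dismiss the opposite case by claiming that $A<k$ implies $\sqrt{Ak}\le A$; the inequality goes the other way ($\sqrt{Ak}\le A$ holds precisely when $k\le A$). So in the regime $k>A$ --- exactly the regime where $\sqrt{Ak}$ dominates $A$ and a matching lower bound is genuinely needed --- your argument proves nothing, and the empty-square family cannot rescue it.

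The gap is easily repaired, and the repair shows why the spacing-at-least-$1$ condition you impose is unnecessary: the $\Omega(\sqrt{Ak})$ bound for the grid family does not use visibility at all. To explore the small region inside the convex hull of an obstacle the robot must physically enter that convex hull (this is the argument of Theorem~\ref{t:lower-unlimited}, which transfers verbatim since limited vision can only make exploration harder), so the trajectory must visit $k$ points that are pairwise at distance at least the grid spacing minus $O(\epsilon)$, giving length at least $(k-1)\bigl(\frac{\sqrt{A+x}}{\sqrt{k}+1}-\epsilon\bigr)=\Omega(\sqrt{Ak})$ for \emph{any} relation between $A$ and $k$. This is how the paper argues: its grid terrain has side $\sqrt{A+x}$, total obstacle perimeter $1$, and no constraint tying $k$ to $A$, so your worry that nearby obstacles might let the robot ``see into'' a cavity from afar is moot --- seeing is irrelevant, entering is what is counted. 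With that correction your proposal coincides with the paper's proof.
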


\begin{proof}
In order to prove our lower bound we present three families of terrains. A terrain in the first family 
(cf. Fig. \ref{fig:lower-unlimited}(a)) is a square with identical obstacles of diameter $\epsilon$ located on a $\sqrt{k}\times\sqrt{k}$ grid. The side of the square is $\sqrt{A+x}$, where $x$ is the negligible total area of all $k$ obstacles and the total perimeter of all obstacles is $1$. By the same arguments as in the proof of Theorem~\ref{t:lower-unlimited}, any exploration trajectory must be of length at least $(k-1)\left(\frac{\sqrt{A+x}}{\sqrt{k}+1}-\epsilon\right)$, which is in $\Omega(\sqrt{Ak})$. At the same time we have $P= \Theta(\sqrt{A})$ ($P$ cannot be smaller). A terrain in the second family (cf. Fig. \ref{fig:lower-unlimited}(b)) is a polygon of arbitrarily small area (without obstacles), whose exploration requires a trajectory of size $\Omega(P)$. A terrain in the third family is the empty square of side $\sqrt{A}$. Now we have $P=\Theta(\sqrt{A})$ and $k=0$. When the robot traverses a distance $d$, it explores a new area of at most $2d$. So, the robot has to travel a distance $\Omega(A)$ to explore such a terrain. These three families of terrains justify our lower bound.
\end{proof}

The examples from the above proof can be adjusted so that all considered terrains are $c$-fat,
for any fixed constant $c>1$. Lemma~\ref{lem:algAk} and Theorem~\ref{t:lower-limited} imply

\begin{theorem}\label{th:main}
Consider terrains of area $A$, perimeter $P$ and with $k$ obstacles.
If either $A$ or $k$ is known before the exploration, then the exploration of any such terrain can be performed using a trajectory of length $\Theta(P+A+\sqrt{Ak})$, which is asymptotically optimal.
\end{theorem}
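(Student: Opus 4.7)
The statement is essentially a combination of two results already established in the excerpt, so my plan is to derive it directly by pairing the upper bound from Lemma~\ref{lem:algAk} with the matching lower bound from Theorem~\ref{t:lower-limited}.

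First, for the upper bound, I would invoke Lemma~\ref{lem:algAk} in the appropriate case. If the robot knows $A$ in advance, it runs Algorithm $\algA$, and if it knows $k$ in advance, it runs Algorithm $\algk$; Lemma~\ref{lem:algAk} guarantees in both cases that the resulting trajectory has length $O(P+A+\sqrt{Ak})$. Hence, under either assumption, the exploration cost is $O(P+A+\sqrt{Ak})$.

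For the lower bound, I would appeal directly to Theorem~\ref{t:lower-limited}, which exhibits three families of terrains (a grid of tiny obstacles, a many-corridor polygon, and an empty square) showing that any exploration algorithm with limited vision must produce trajectories of total length $\Omega(P+A+\sqrt{Ak})$ on some terrain. Since Theorem~\ref{t:lower-limited} grants this bound even to a robot that already knows the terrain, it certainly applies to robots that only know $A$ or only know $k$ beforehand.

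Combining the two estimates yields $\Theta(P+A+\sqrt{Ak})$ as claimed. There is no genuine obstacle here, since all the technical work (the analysis of the two adaptive algorithms $\algA$ and $\algk$, and the construction of the three lower-bound families) has already been done in Lemma~\ref{lem:algAk} and Theorem~\ref{t:lower-limited}; the only step is to observe that the hypotheses of both results hold simultaneously when either $A$ or $k$ is known.
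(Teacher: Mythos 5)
Your proposal is correct and matches the paper exactly: the paper states Theorem~\ref{th:main} as an immediate consequence of Lemma~\ref{lem:algAk} (upper bound via $\algA$ or $\algk$, depending on which of $A$ or $k$ is known) together with the lower bound of Theorem~\ref{t:lower-limited}, which is precisely your argument.
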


Notice that in order to explore a terrain at cost $O(P+A+\sqrt{Ak})$, it is enough to know the parameter $A$ or $k$ up to a multiplicative constant, rather than the exact value. This can be proved by a carefull modification of the proof of Lemma~\ref{lem:algAk}. For the sake of clarity, we stated and proved the weaker version of Lemma~\ref{lem:algAk}, with knowledge of the exact value. 
 
Suppose now that no a priori knowledge of any parameters of the terrain is available. We iterate Algorithm $\algA$ or $\algk$ for $A$ (resp. k) equal $1,2,4,8,\ldots$ interrupting the iteration and doubling the 
parameter as soon as the explored area (resp. the number of obstacles seen) exceeds the current parameter value. The algorithm stops when the entire terrain is explored (which happens at the first probe exceeding the actual unknown value of $A$, resp. $k$). We get an exploration algorithm using a trajectory of length $O((P+A+\sqrt{Ak})\log A)$, resp. $O((P+A+\sqrt{Ak})\log k)$. By interleaving the two procedures we get the minimum of the two costs. Thus we have the following corollary. 

\begin{corollary}
Consider terrains of area $A$, perimeter $P$ and with $k$ obstacles.
Exploration of any such terrain can be performed without any a priori knowledge at cost differing from the worst-case optimal cost with full knowledge only by a factor $O(\min\{\log A,\log k\})$.
\end{corollary}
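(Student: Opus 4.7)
The plan is to apply a doubling-guess scheme for each of the parameters $A$ and $k$ separately, and then combine the two via interleaving. Define a meta-algorithm $\mathcal{M}_A$ that runs $\algA$ with successive guesses $\hat{A}=1,2,4,8,\ldots$: during each iteration with guess $\hat{A}$, monitor the running total $A^*$ of explored area as in the definition of $\algA$, and abort the iteration as soon as $A^*>\hat{A}$, doubling the guess for the next iteration; halt if exploration completes. Define $\mathcal{M}_k$ analogously via $\algk$ and the running obstacle count~$k^*$.

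To analyze $\mathcal{M}_A$, first argue termination: whenever $\hat{A}\geq A$, $\algA$ run with parameter $\hat{A}$ fully explores the terrain by Lemma~\ref{lem:algAk} (with $A$ replaced by $\hat{A}$) and the abort condition is never triggered, since $A^*\leq A\leq \hat{A}$ throughout. Hence $\mathcal{M}_A$ stops at some iteration $j\leq \lceil \log_2 A\rceil+1$. Next, bound the cost of each iteration: running $\algA$ with input $\hat{A}$ on the actual terrain (of true area $A$, perimeter $P$, with $k$ obstacles) produces a trajectory of length $O(P+\hat{A}+\sqrt{\hat{A}\,k})$ by reusing the argument of Lemma~\ref{lem:algAk} with $\hat{A}$ in place of $A$. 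An aborted prefix is bounded by the same quantity. Summing the resulting geometric series over all iterations yields
\[
\sum_{i=0}^{\lceil\log_2 A\rceil+1}O\bigl(P+2^i+\sqrt{2^i\,k}\bigr)=O\bigl((P+A+\sqrt{Ak})\log A\bigr).
\]
A symmetric analysis for $\mathcal{M}_k$ gives cost $O((P+A+\sqrt{Ak})\log k)$.

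To obtain the minimum of the two bounds, I would interleave the two meta-algorithms phase by phase, returning the robot to the common starting point $s$ between phases by retracing (which at most doubles the cost of each phase, since every phase trajectory begins at $s$), and stopping as soon as either meta-algorithm completes exploration. The total trajectory length is then at most a constant times the cost of the faster meta-algorithm, giving $O((P+A+\sqrt{Ak})\min\{\log A,\log k\})$, as required.

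The main obstacle is carefully justifying the per-iteration cost bound $O(P+\hat{A}+\sqrt{\hat{A}\,k})$ when $\algA$ is run with a guess $\hat{A}$ that may or may not upper bound the true area. The proof of Lemma~\ref{lem:algAk} uses $A$ only (i) in the definitions of $F_i$ and the stopping condition, and (ii) as an upper bound on the quantities $A^*$ appearing in the cost inequalities. Substituting $\hat{A}$ for $A$ leaves (i) unchanged and, for aborted iterations, the invariant $A^*\leq \hat{A}$ supplies the needed upper bound in (ii); for the final, completed iteration, $\hat{A}\leq 2A$ yields the claimed $O(P+A+\sqrt{Ak})$ dominant term.
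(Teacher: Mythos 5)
Your proposal is correct and follows essentially the same route as the paper, which itself only sketches the argument: iterate $\algA$ (resp.\ $\algk$) with doubled guesses, aborting a run as soon as the observed area (resp.\ number of obstacles) exceeds the current guess, bound each iteration by the cost of Lemma~\ref{lem:algAk} with the guess in place of the true parameter, sum the geometric series, and interleave the two procedures to obtain the minimum of the two logarithmic factors. Your added details (the per-iteration bound $O(P+\hat{A}+\sqrt{\hat{A}k})$ via the invariant $A^*\leq\hat{A}$, and retracing to $s$ between interleaved phases) are sound refinements of exactly the paper's argument.
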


\end{document}